\begin{document}
%
\title{Age of Information in Relay-Assisted \\Status Updating Systems}
%
%
%

\author{Jixiang~Zhang,
        and~Yinfei~Xu,~\IEEEmembership{Member,~IEEE}
\thanks{The authors are with the School of Information Science and Engineering,
Southeast University, Nanjing 210096, China (e-mail: 230179077@seu.edu.cn; yinfeixu@seu.edu.cn).}}

%
%

\markboth{IEEE Transactions on Communications}%
{Submitted paper}
%



\maketitle

\begin{abstract}
In this paper we consider the age of information (AoI) of a status updating system with a relay, where the updates are delivered to destination either from the direct line or the two-hop link via the relay. An updating packet generated at source is sent to receiver and the relay simultaneously. When the direct packet transmission fails, the relay replaces the source and retransmits the packet until it is eventually obtained at the receiver side. Assume that the propagation delay on each link is one time slot, we determine the stationary distribution of the AoI for three cases: (a) relay has no buffer and the packet delivery from relay cannot be preempted by fresher updates from source; (b) relay has no buffer but the packet substitution is allowable; (c) relay has size 1 buffer and the packet in buffer is refreshed when a newer packet is obtained. The idea is invoking a multiple-dimensional state vector which contains the AoI as a part and constituting the multiple-dimensional AoI stochastic process. We find the steady state of each multiple-dimensional AoI process by solving the system of stationary equations. Once the steady-state distribution of larger-dimensional AoI process is known, the stationary AoI distribution is also obtained as it is one of the marginal distributions of that process's steady-state distribution. For all the situations, we derive the explicit expression of AoI distribution, and calculate the mean and the variance of the stationary AoI. All the results are compared numerically, including the AoI performance of the non-relay state updating system. Numerical results show that adding the relay improves the system's timeliness dramatically, and \emph{no-buffer-and-preemption} setting in relay achieves both minimal average AoI and AoI's variance. Thus, for the system model discussed in this paper, to reduce the AoI at receiver there is no need to add the buffer in relay.
\end{abstract}


\begin{IEEEkeywords}
Age of information (AoI), multiple dimensional stochastic process, stationary distribution, status updating system with a relay.
\end{IEEEkeywords}

%
\IEEEpeerreviewmaketitle

\newtheorem{Definition}{Definition}
\newtheorem{Theorem}{Theorem}
\newtheorem{Lemma}{Lemma}
\newtheorem{Corollary}{Corollary}
\newtheorem{Proposition}{Proposition}
\newtheorem{Remark}{Remark}

\section{Introduction}

\IEEEPARstart{A}{part} from the traditional performance targets such as low propagation delay and large throughput, modern network designs put forward new requirements on information freshness and the timeliness of an information transmission system. In many remote monitoring system, the source has to transmit the messages or signals via a channel fast to keep the receiver's knowledge about certain physical process in source side fresh enough. These messages are time sensitive, whose freshness is of great importance in some applications. For example, the controller of a driverless car must obtain sufficiently new messages to keep the car running safely. The central node in a sensor network needs the fresh data from its neighbours to perform certain real time computations, such that an optimal resource scheduling policy is determined.

In general, this kind of information transmission system is abstracted into a status updating system, where the updates are generated in source and then sent to the receiver. The state at the receiver is refreshed after an updating packet is obtained successfully. To characterize the timeliness of the system or measure the freshness of an updating packets, a timeliness metric called age of information (AoI) was introduced in \cite{1}. Since then, lots of articles has been published focusing on different aspects of AoI, such as the AoI performance analysis and designing various optimal status updating system who has minimal average AoI. A detailed survey about AoI can be found in paper \cite{2}, in which the authors summarize the recent contributions of AoI in the aspect of theory and practice.

For simple queue models such as $M/M/1$, $M/D/1$, and $M/M/\infty$, the average AoI of state updating system were derived in papers \cite{3,4,5}. In particular, the Stochastic Hybrid System (SHS) analysis which is a systematic method to calculate the mean of AoI was introduced in paper \cite{5}. The method has also been used to determine the average AoI of some other status updating systems in \cite{6,7,8,9,10,11,12} recently. Packet managements problem was considered in \cite{13}, where the authors defined another metric called peak age of information (PAoI) and derived the average performance of both AoI and PAoI for three small-size updating systems. Assume that the packet has a fixed or random deadline, the long term average value of steady-state AoI was calculated in work \cite{14}. Apart from the analytical results of AoI described above, in practical applications the AoI has been considered widely in optimal system designs, either as the performance target or acting as one of constraints \cite{15,16,17,18,19,20,21,22}.

Observing that in the majority of systems considered before, an updating packet takes only one hop from source to the destination. Meanwhile, often there is only one path between the two nodes. However, both the packet multihop transmissions and the packet multipath transmissions are existent everywhere in practical communication networks, especially in wireless environment. The AoI of status updating system having two parallel servers was considered in \cite{23}, the authors computed the average AoI of this two-server system by sophisticate random events analysis. In paper \cite{24} the AoI in multihop networks was analyzed. Recently, the AoI of a three node status updating system was considered in \cite{25}, where a relay was added to cooperate the packet transmission from source to receiver. The authors showed that although two-hop transmission via the relay takes longer propagation delay, but the transmission success probability is raised, in particular when the direct line from source to destination is bad. Summarize both aspects, they proved that adding a relay can reduce the average AoI at the receiver, thus improving the system's timeliness.

In this paper, we consider a more general relay-assisted status updating system than that was discussed in \cite{25}, and derive more fundamental results about system's AoI. There, the situation that the relay has buffer was not mentioned and, for the non-relay cases, only the mean of steady-state AoI was calculated. We investigate both non-relay cases and the situation where the relay has a size 1 buffer. For each cae, the stationary distribution of AoI was derived explicitly, thus obtaining the complete description of the steady-state AoI. In \cite{25}, the authors derived their results using the most initial formula for the limiting time average AoI, which is used to deal with continuous time AoI. However, as the time is slotted, the AoI is discretized as well. Thus, the AoI considered in that paper, and in many other papers where the time was slotted is actually discrete. Here, we show that the analysis of AoI can be done by the idea and standard tools from queueing theory. The idea is invoking a multiple-dimensional discrete state vector which contains the AoI as a part, and constituting the multiple-dimensional AoI stochastic process. As long as the steady state of the constituted process is determined, i.e., all the stationary probabilities are solved, the stationary AoI distribution can be found by merging all the stationary probabilities with identical AoI state component, as it is in fact one of the marginal distributions of the larger-dimensional AoI process's steady-state distribution.

We point out that the idea that invoking a multiple-dimensional state vector to characterize the random transfers of AoI and analyzing the steady state of the constituted AoI process is also the idea of the SHS approach when the continuous AoI is discussed. Differently, finding the steady-state of a AoI stochastic hybrid system, one has to solve a system of differential equations which is in general very hard. While when the discrete AoI is analyzed, solving the system of stationary equations of constituted discrete AoI process is easier. Just recently, in paper \cite{26,27,28,29}, the authors proved that following above straightforward thinking, the stationary distribution of continuous AoI of status updating system with a few of simple queue models can also be determined, using AoI's Laplace Stieltjes Transform (LST) form. Finding the stationary distribution of AoI gets more attention in recent two or three years, both for continuous and discrete form of AoI \cite{30,31,32,33}.

The rest of the paper is organized as follows. We describe the model of a relay-assisted state updating system in Section II and introduce the system parameters. In Section III, assume that the relay has no buffer, we analyze the system's AoI of two cases where the packet retransmissions from relay can and cannot be preempted by later updates from the source node. The explicit expressions of steady-state AoI are derived by constituting a two-dimensional stochastic process. For the situation the relay is equipped with a size 1 buffer, the AoI of relay-assisted system is discussed in Section IV. There, a three-dimensional state vector is defined and the steady state of a three-dimensional AoI stochastic process is analyzed. The stationary AoI distribution is obtained as well. Provided the AoI distributions obtained before, for each case, we calculate the mean and the variance of AoI in Section V. Numerical results are given in Section VI, in which we compare the average AoI and the AoI's variance of all three cases. Besides, to demonstrate the improvement of adding the relay, we also calculate the AoI performance of an updating system without relay. The paper is summarized in Section VII, where some possible generalizations of relay-assisted system are discussed as well.

\section{System Model and Problem Formulation}
The system model is depicted in Figure \ref{fig1}. The source $s$ observes certain physical process and samples the state of the process at random times. The packet arrival is assumed to be a Bernoulli process, that is a packet arrives in each time slot independently with identical probability $p$. An updating packet is delivered to destination $d$, so that the knowledge of receiver about the process is refreshed. Since the existence of relay $R$, the packet can be transmitted either from the direct line or the two-hop link via $R$. Specifically, when the source $s$ transmits a packet, the same packet is also sent to $R$ at the same time.

\begin{figure}[!t]
\centering
\includegraphics[width=3.4in]{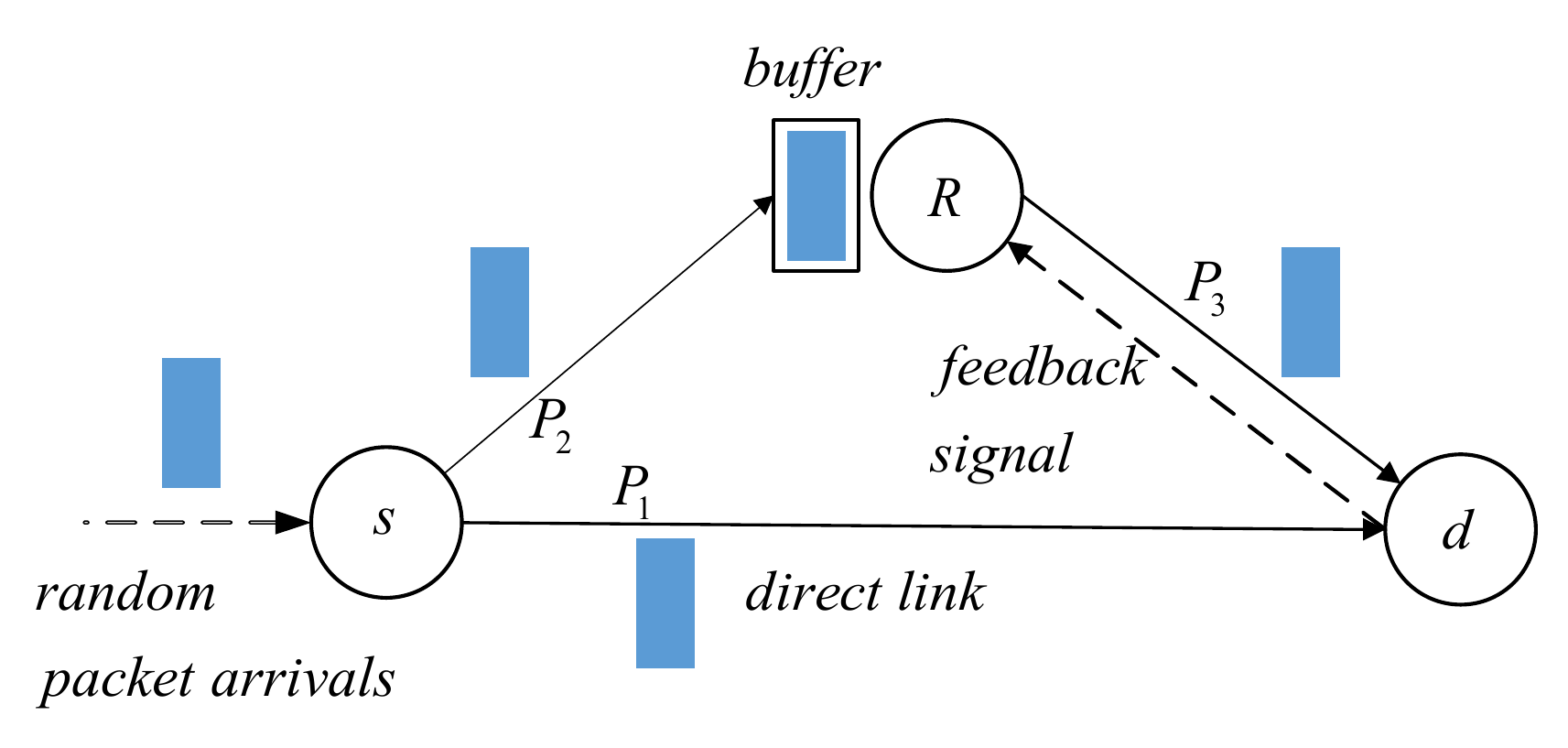}
\caption{The model of a status updating system with a relay.}
\label{fig1}
\end{figure}

Assume that $R$ gets a feedback signal when every time an updating packet is obtained at the receiver. In this case, the packet stored in $R$ is deleted to avoid redundant transmissions. In particular, if one packet is transmitted on direct link successfully, then all the packets in $R$ need to be deleted, including the one waiting in buffer when we consider the \emph{relay-has-buffer} case, since all of these packets are out of date. Otherwise, the relay will replace the source and retransmit the packet to $d$. Since $R$ is closer to $d$, it is more likely that the transmission is successful. Notice that the $s-R$ link is also unreliable, it is possible that both $s-R$ and $s-d$ transmissions fail at the same time. The source $s$ triggers the next transmission when a new packet arrives.
We are interested in the stationary AoI of the relay-assisted system for three cases, i.e., when the relay $R$ has no buffer, we discuss the AoI performance assuming that the packet retransmission can and cannot be preempted by fresher packets from source, and the steady-state AoI of the system when the relay is equipped with a size 1 buffer.

The three parameters $P_1$, $P_2$, and $P_3$ denote the transmission success probabilities of $s-d$, $s-R$, and $R-d$ links, respectively. Generally speaking, $P_1$ is less than $P_2$ and $P_3$. The propagation delays on all links are suppose to be one time slot. In this paper, we shall prove that for all of three cases the stationary distribution of AoI can be determined, such that the complete description of the steady-state AoI for the considered relay-assisted status updating system is obtained.

\section{Analysis of Age of Information: Relay has no buffer}
In this Section, assume that the relay has no buffer, we analyze the stationary AoI of the relay-assisted status updating system by constituting a two-dimensional discrete stochastic process. We determine the explicit expressions of the AoI distribution for both the cases where the packet in relay can and cannot be preempted by the newer ones from thee source.


Observing that if no packet arrives to the destination, the AoI increases by 1. The value of AoI is reduced to the age of the packet, which is actually equal to the system time of the packet until it is eventually received at receiver $d$. Define the two-dimensional discrete state vector $(n,m)$, where $n$ denotes the AoI at $d$, and $m$ represents the age of packet which is stored in relay $R$. When there is no packet in $R$, we let $m=0$.

In the following, for each case, we describe the random state transfers of each state vector, establish the stationary equations of the constituted AoI stochastic process and finally find its steady state, i.e., determining the stationary probabilities of every state vector $(n,m)$. Then, the stationary distribution of AoI is obtained by merging all the probabilities of those state vectors that have identical AoI-component $n$.

Notice that in the following paragraphs we also call the state vector as the age-state of the system, or simply the state.

\subsection{Analysis of AoI: packet preemption is allowable in relay}
Firstly, we analyze the AoI of the system assuming that the packet in $R$ is substituted when every time a fresher update is obtained from the source. The random transfers of age-state $(n,m)$ are described in the following.

First of all, assume that the age-state is $(n,m)$ where $n> m\geq1$. In this case, a packet of age at least 1 is contained in $R$. If no packet arrives to source at next time slot, then the state transfers are dependent on the packet transmission on $R-d$ link. With probability $P_3$, the packet delivery succeeds and the state vector changes to $(m+1,0)$. The second state-component is zero because the packet in $R$ is deleted when it was sent to $d$ successfully. Otherwise, both $n$ and $m$ will increase 1 at next time slot, i.e., the age-state turns to $(n+1,m+1)$. For the case a new packet is generated and sent out from source $s$, we have to analyze all the packet transmissions on links $s-R$, $R-d$, and direct link $s-d$.

Let $A$ be the random variable of packet arrivals, and we use $T_{s-d}$, $T_{s-R}$, $T_{R-d}$ to indicate whether the packet transmission is successful in three links, respectively. $A=0$ denotes no packet arrives and $T_{s-d}=0$ means that the packet transmission on $s-d$ link fails. Other variables are defined in similar manner. We list all the possible state transitions in Table \ref{table1}.

\begin{table}[!t]
\caption{Description of state vector transfers of AoI stochastic process $AoI_{NB}^P$}
\label{table1}
\centering
\begin{tabular}{c|c|c}
\Xhline{1pt}
Current state       &    $A=0,T_{R-d}$    &    $A=1,T_{s-d},T_{s-R},T_{R-d}$ \\
\Xhline{1pt}
\multirowcell{8}{$(n,m)$,\\$n>m\geq1$}    &  $(0,0):(n+1,m+1)$     &  $(1,0,0,0):(n+1,m+1)$\\
\cline{2-3} &  $(0,1):(m+1,0)$         &  $(1,0,0,1):(m+1,0)$\\
\cline{2-3} &                                    &$(1,0,1,0):(n+1,1)$\\
\cline{2-3} &                                    &$(1,0,1,1):(m+1,1)$\\
\cline{2-3} &                                    &$(1,1,0,0):(1,0)$\\
\cline{2-3} &                                    &$(1,1,0,1):(1,0)$\\
\cline{2-3} &                                    &$(1,1,1,0):(1,0)$\\
\cline{2-3} &                                    &$(1,1,1,1):(1,0)$\\
\Xhline{1pt}
Current state       &    $A=0$    &    $A=1,T_{s-d},T_{s-R}$ \\
\Xhline{1pt}
\multirowcell{4}{$(n,0)$,\\$n\geq1$}    &  $(n+1,0)$     &  $(1,0,0):(n+1,0)$\\
\cline{2-3} &                                    &  $(1,0,1):(n+1,1)$\\
\cline{2-3} &                                    &$(1,1,0):(1,0)$\\
\cline{2-3} &                                    &$(1,1,1):(1,0)$\\
\Xhline{1pt}
\end{tabular}
\end{table}

\begin{table}[!t]
\caption{All the state transfers and corresponding transition probabilities of stochastic process $AoI_{NB}^P$}
\label{table3}
\centering
\begin{tabular}{c|c}
\Xhline{1pt}
From       & State transfers to at next slot/Transition probabilities \\
\Xhline{1pt}
\multirowcell{5}{$(n,m)$,\\$m\geq1$} & $(n+1,m+1):(1-p)(1-P_3)+ p(1-P_1)(1-P_2)(1-P_3)$\\
\cline{2-2} &  $(m+1,0):(1-p)P_3+p(1-P_1)(1-P_2)P_3$\\
\cline{2-2} &  $(n+1,1):p(1-P_1)P_2(1-P_3)$\\
\cline{2-2} &  $(m+1,1):p(1-P_1)P_2P_3$\\
\cline{2-2} &  $(1,0):pP_1$\\
\hline
\multirowcell{3}{$(n,0)$,\\$n\geq1$} & $(n+1,0):(1-p)+p(1-P_1)(1-P_2)$\\
\cline{2-2} & $(n+1,1):p(1-P_1)P_2$\\
\cline{2-2} & $(1,0):pP_1$\\
\Xhline{1pt}
\end{tabular}
\end{table}

It is necessary to explain all the cases including $A=1$ and $T_{s-d}=1$ in Table \ref{table1}. When a new packet is delivered successfully over direct link $s-d$, the value of AoI must reduce to 1 since a packet of age 1 is obtained in $d$, although it is possible that an another packet is received on $R-d$ link. At the same time, according to our assumption, a feedback signal is sent to $R$ and the relay deletes the packet received at current time slot or before. Thus, $m$ is reduced to zero and the state vector jumps to $(1,0)$.

Still we have to consider the state transfers when the initial age-state is $(n,0)$, which means the relay $R$ is currently empty. If no packet arrives, it is easy to see that the age-state transfers to $(n+1,0)$ at next time slot. On the contrary, for the case $A=1$, we need to consider the packet transmissions on $s-d$ and $s-R$ links. All the possible state transitions from age-state $(n,0)$ are given in the last four lines in Table \ref{table1}.

Constituting the two-dimensional AoI stochastic process
$$
AoI_{NB}^P=\{(n_k,m_k):n_k> m_k\geq0,k\in\mathbb{N}\},
$$
where the subscript $NB$ denotes that at relay $R$ there is no buffer, and the superscript $P$ indicates that the packet retransmission in $R$ can be preempted. 
Define $P_{(n,m),(i,j)}$ be the single step transition probability from $(n,m)$ to age-state $(i,j)$. Summarize above discussions, we list all the state transfers and corresponding transition probabilities of the AoI process $AoI_{NB}^P$ in Table \ref{table3}.
Denote that $\pi_{(n,m)}$ is the probability of age-state $(n,m)$, $n> m\geq0$ when the AoI process runs in steady state. In following Theorem 1, we establish the stationary equations of the process $AoI_{NB}^P$.

\begin{Theorem}
Assume that the stochastic process $AoI_{NB}^P$ reaches the steady state, then the stationary probabilities $\pi_{(n,m)}$ of age-state $(n,m)$, $n>m\geq0$ satisfy that
\begin{equation}
\begin{cases}
\pi_{(n,m)}=\pi_{(n-1,m-1)}\big[(1-p)(1-P_3) + p(1-P_1)(1-P_2)(1-P_3)\big]            & \quad   (n>m\geq2)  \\
\pi_{(n,1)}=\pi_{(n-1,0)}p(1-P_1)P_2 + \sum\nolimits_{j=1}^{n-2}\pi_{(n-1,j)}p(1-P_1)P_2(1-P_3)  \\
\qquad \qquad \qquad  + \sum\nolimits_{k=n}^{\infty}\pi_{(k,n-1)}p(1-P_1)P_2P_3                     &  \quad   (n\geq3)  \\
\pi_{(2,1)}=\pi_{(1,0)}p(1-P_1)P_2 + \sum\nolimits_{k=2}^{\infty}\pi_{(k,1)}p(1-P_1)P_2P_3   \\
\pi_{(n,0)}=\pi_{(n-1,0)}\left[(1-p)+p(1-P_1)(1-P_2)\right]    \\
\qquad \qquad \qquad  + \sum\nolimits_{k=n}^{\infty}\pi_{(k,n-1)}\left[(1-p)P_3+p(1-P_1)(1-P_2)P_3\right]   & \quad   (n\geq2)  \\
\pi_{(1,0)}=\left(\sum\nolimits_{n=1}^{\infty}\pi_{(n,0)}+ \sum\nolimits_{m=1}^{\infty}\sum\nolimits_{n=m+1}^{\infty}\pi_{(n,m)}   \right) pP_1
\end{cases}
\end{equation}
\end{Theorem}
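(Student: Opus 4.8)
The plan is to recognize the system (1) as the global balance equations of the Markov chain $AoI_{NB}^P$ and to obtain each one by accounting for every transition that \emph{enters} the relevant target age-state. In steady state the stationary probabilities obey $\pi_{(i,j)}=\sum_{(n,m)}\pi_{(n,m)}P_{(n,m),(i,j)}$, so the entire argument reduces to fixing each target $(i,j)$, reading Table \ref{table3} \emph{backwards} to enumerate all of its predecessors, and summing their contributions. The five displayed equations correspond to five strata of the state space, each of which receives probability flow from a different subset of the transitions.

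First I would treat the generic interior states $(n,m)$ with $m\geq2$. Scanning Table \ref{table3}, the only transition whose image can carry a second coordinate larger than $1$ is $(n',m')\to(n'+1,m'+1)$ issued from a state with $m'\geq1$; every remaining transition lands in a state whose second coordinate is $0$ or $1$. Hence $(n,m)$ has the single predecessor $(n-1,m-1)$, and collecting the two terms of its transition probability produces the first equation.

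Next I would handle the states $(n,1)$, which are fed by three separate mechanisms: the transition $(n-1,0)\to(n,1)$ with probability $p(1-P_1)P_2$ (a fresh packet is loaded into an empty relay while the destination age advances); the transitions $(n-1,j)\to(n,1)$ for each $1\leq j\leq n-2$ with probability $p(1-P_1)P_2(1-P_3)$ (a fresh packet replaces the relay's stale one while the $R-d$ delivery fails); and the transitions $(k,n-1)\to(n,1)$ for every $k\geq n$ with probability $p(1-P_1)P_2P_3$ (the relay delivers its age-$(n-1)$ packet, resetting the destination age to $n$, and simultaneously loads a fresh replacement). The last mechanism is the source of the infinite sum $\sum_{k=n}^{\infty}$: a successful $R-d$ delivery makes the new destination age depend only on the relay age $m'=n-1$, so the former destination coordinate $k$ is \emph{forgotten} and must be summed out over all admissible $k\geq n$. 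Isolating the boundary case $n=2$, in which the middle sum is empty, yields the third equation.

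The states $(n,0)$ with $n\geq2$ are treated analogously: they collect flow from $(n-1,0)\to(n,0)$ (no fresh packet loaded) and from $(k,n-1)\to(n,0)$ for all $k\geq n$ (the relay delivers its age-$(n-1)$ packet and is left empty), again producing a sum over the forgotten coordinate $k$. Finally, the state $(1,0)$ absorbs the probability $pP_1$ of a successful direct $s-d$ transmission from \emph{every} state of the chain, so its balance equation sums $\pi_{(n,m)}$ over the whole state space, which is exactly the last equation. The main difficulty here is bookkeeping rather than analysis: one must correctly invert the transition map, in particular noticing that the age-reset transitions of the form $(k,m')\to(m'+1,\cdot)$ collapse an entire column of source states onto a single target and therefore generate the sums $\sum_{k=n}^{\infty}$, and one must carefully separate the boundary strata $m\geq2$, $m=1$, and $m=0$ (with the further split between $n=2$ and $n\geq3$), since each stratum draws flow from a distinct family of transitions.
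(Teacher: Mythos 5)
Your proposal is correct and follows essentially the same route as the paper's proof: both arguments fix each target age-state, enumerate its predecessors from the transition table (the unique predecessor $(n-1,m-1)$ for $m\geq 2$, the three feeding mechanisms for $(n,1)$ including the collapsed column $(k,n-1)$, $k\geq n$, the two mechanisms for $(n,0)$, and the universal $pP_1$ inflow to $(1,0)$), and sum the incoming probability flows. The paper phrases this as a forward description of state transfers while you phrase it as reading the table backwards via global balance, but the content and the case decomposition are identical.
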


\begin{proof}
Equations in (1) are explained as follows. First of all, begin with $(n-1,m-1)$, as long as the packet transmission on $R-d$ link fails which occurs with probability $1-P_3$, along with the condition that either no packet is delivered on $s-R$ and $s-d$ links or both transmissions are unsuccessful, then the age-state transfers to $(n,m)$ at next time slot. This gives the first line of equations (1).

To obtain $(n,1)$, more cases need to be considered. From age-state $(n-1,0)$ which means $R$ is initially empty, at source $s$ a packet arrives and is transmitted to $R$ successfully makes the second parameter jump to 1. On the other hand, let the packet delivery on $s-d$ link fail, such that the AoI in $d$ increases to $n$ at next time slot. This state transitions occurs with probability $p(1-P_1)P_2$. Next, notice that the packet in $R$ is allowed to be refreshed, the second state-component $1$ can also be created by packet replacement. Assume that the initial state vector is $(n-1,j)$, $1\leq j\leq n-2$. When the new packet is transmitted on $s-R$ link successfully but the packet transmissions on both $s-d$ and $R-d$ links fail, it was observed that the state also transfers to $(n,1)$. Thus, we obtain the second term of the RHS of second equation. Now, suppose that a packet of age $n-1$ goes through $R-d$ link to the receiver. As a result, the AoI in $d$ is reduced to $n$ at next time slot. At the same time, the relay $R$ obtains a new packet from $s$ via $s-R$ link, which changes the second parameter to 1. In this case, we can obtain the age-state $(n,1)$ again. The initial state can be $(k,n-1)$ where $k$ is an arbitrary number greater than $n-1$. For the case $n=2$, packet substitution in $R$ is nonexistent since $R$ is empty before one time slot. With all above discussions, we explain how the state vector $(n,1)$, $n\geq2$ is obtained and determine its stationary equation.

Then, the random transitions to $(n,0)$ is analyzed. The relay $R$ is empty either it is empty initially, or is emptied as the contained packet is sent to $d$ successfully. Therefore, two cases have to be analyzed. Firstly, from age-state $(n-1,0)$, as long as no packet arrives to $R$ and $d$, obviously the state jumps to $(n,0)$ after one time slot. On the other hand, let the initial state is $(k,n-1)$, $k\geq n$, a packet of age $n-1$ can be transmitted to $d$ via $R-d$ link and assume that no other packet is obtained from $s$ through the direct link $s-d$, which ensures that the AoI is set to $n$ at next time slot. The second parameter maintains 0 if no packet comes to $s$, or the arriving packet is not transmitted to $R$ successfully on $s-R$ link. Combined with the transition probabilities in Table \ref{table3}, we give the stationary equation for the state $(n,0)$ in the end.

The only way to get age-state $(1,0)$ is transmitting a packet via the direct link $s-d$, no matter what the initial state is. Actually, this last equation determines $\pi_{(1,0)}=pP_1$ directly, since the probabilities in bracket add up to 1.

So far, all the state transfers are considered, and the proof of Theorem 1 completes.
\end{proof}

We solve the system of equations (1) and determine all the stationary probabilities $\pi_{(n,m)}$ in Theorem 2, whose proof is postponed to Appendix A.
\begin{Theorem}
The probabilities $\pi_{(n,m)}$ of AoI process $AoI_{NB}^P$ are determined as follows. Firstly,
\begin{equation}
\pi_{(n,0)}=(1-\delta)\delta^{n-1} - p(1-P_1)P_2[(1-P_3)\delta]^{n-1}  \qquad (n\geq1)
\end{equation}
and for $n>m\geq1$
\begin{align}
\pi_{(n,m)}&=\pi_{(2,1)}[(1-pP_1)(1-P_3)]^{n-2}\left(\frac{\delta}{1-pP_1}\right)^{m-1} + \frac{p(1-P_1)P_2P_3(1-\delta)\delta}{\delta-(1-pP_1)(1-P_3)} \notag \\
& \qquad \quad \times \left[ \delta^{n-2}(1-P_3)^{m-1} - [(1-pP_1)(1-P_3)]^{n-2}\left(\frac{\delta}{1-pP_1} \right)^{m-1} \right]
\end{align}
where
\begin{align}
\delta=1-p(P_1+P_2-P_1P_2),  \qquad  \pi_{(2,1)}=p^2(1-P_1)P_2[P_1+(1-P_1)P_2P_3]  \notag
\end{align}
\end{Theorem}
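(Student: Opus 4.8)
The plan is to exploit the diagonal structure of system (1) to reduce every unknown to the two boundary sequences $\pi_{(n,0)}$ and $\pi_{(n,1)}$. I would first rewrite the coefficients through $\delta$: with $r=(1-P_3)\delta$, $c=p(1-P_1)P_2$ and $u=(1-pP_1)(1-P_3)$ one records the identities $(1-p)+p(1-P_1)(1-P_2)=\delta$, $\;1-\delta=pP_1+c$, $\;\delta-r=\delta P_3$ and $\;c(1-P_3)+r=u$, which are what force the later cancellations. The first equation of (1) then reads $\pi_{(n,m)}=r\,\pi_{(n-1,m-1)}$ for $n>m\ge2$, so iterating down the diagonal gives $\pi_{(n,m)}=r^{\,m-1}\pi_{(n-m+1,1)}$ for all $m\ge1$. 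As an immediate consequence every vertical tail telescopes: $\sum_{k\ge n}\pi_{(k,n-1)}=r^{\,n-2}S$ with $S:=\sum_{k\ge2}\pi_{(k,1)}$, and likewise $\sum_{n}\pi_{(n,m)}=r^{\,m-1}S$ for $m\ge1$. Thus the genuinely unknown data are $\pi_{(n,0)}$, $\pi_{(n,1)}$ and the single scalar $S$.

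Next I would solve for $\pi_{(n,0)}$. Substituting the telescoped tail into the fourth equation of (1) turns it into the first--order recursion $\pi_{(n,0)}=\delta\,\pi_{(n-1,0)}+\delta P_3\,S\,r^{\,n-2}$ $(n\ge2)$, whose boundary value $\pi_{(1,0)}=pP_1$ is read off from the last equation (the bracket there is a complete probability and sums to $1$). Solving with homogeneous mode $\delta^{\,n-1}$ and particular mode $r^{\,n-1}$ yields $\pi_{(n,0)}=(pP_1+S)\delta^{\,n-1}-S\,r^{\,n-1}$. To pin down $S$ I would impose normalization: the total mass $\sum_{n}\pi_{(n,0)}+\sum_{m\ge1}\sum_{n>m}\pi_{(n,m)}$ equals $(pP_1+S)/(1-\delta)$ (the two $S/(1-r)$ contributions cancel), and setting this to $1$ gives $pP_1+S=1-\delta$, i.e. $S=p(1-P_1)P_2=c$. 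This is precisely (2).

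Finally I would recover $\pi_{(n,1)}$, which is the crux. Writing $b_n:=\pi_{(n,1)}$ and feeding the diagonal formula and the telescoped tail into the second equation of (1) gives $b_n=c\,\pi_{(n-1,0)}+c(1-P_3)\sum_{j=1}^{n-2}r^{\,j-1}b_{n-j}+c^{2}P_3\,r^{\,n-2}$, which carries a geometric convolution. The decisive step is to eliminate it: the partial sums $\Sigma_n:=\sum_{j=1}^{n-2}r^{\,j-1}b_{n-j}$ satisfy $\Sigma_{n+1}=b_n+r\Sigma_n$, so combining the equations for $b_{n+1}$ and $b_n$ collapses the convolution into the clean recursion $b_{n+1}=u\,b_n+c\big[\pi_{(n,0)}-r\,\pi_{(n-1,0)}\big]$, where the identity $c(1-P_3)+r=u$ is exactly what makes $u$ the coefficient of $b_n$. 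Using (2) one finds $\pi_{(n,0)}-r\,\pi_{(n-1,0)}=(1-\delta)P_3\,\delta^{\,n-1}$, so the recursion becomes $b_{n+1}=u\,b_n+c(1-\delta)P_3\,\delta^{\,n-1}$; this first--order equation, solved with homogeneous mode $u^{\,n}$ and particular mode $\delta^{\,n}$ and seeded by $\pi_{(2,1)}=c\,\pi_{(1,0)}+cP_3\,S=c(pP_1+cP_3)=p^{2}(1-P_1)P_2[P_1+(1-P_1)P_2P_3]$ from the third equation of (1), expresses $b_n$ as a combination of $u^{\,n-2}$ and $\delta^{\,n-2}$. Reinserting this into $\pi_{(n,m)}=r^{\,m-1}b_{n-m+1}$ and regrouping the powers of $(1-P_3)$, $\delta$ and $(1-pP_1)$ reproduces (3). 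I expect the convolution--elimination step, hinging on the alignment of the identities $c(1-P_3)+r=u$ and $\pi_{(n,0)}-r\,\pi_{(n-1,0)}=(1-\delta)P_3\delta^{\,n-1}$, to be the main obstacle; the remaining manipulations are routine linear recursions and geometric sums.
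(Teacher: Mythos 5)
Your proposal is correct and follows essentially the same route as the paper's Appendix~A: diagonal iteration $\pi_{(n,m)}=r^{m-1}\pi_{(n-m+1,1)}$ with $r=(1-P_3)\delta$, a first-order recursion for $\pi_{(n,0)}$ seeded by $\pi_{(1,0)}=pP_1$, elimination of the geometric convolution by differencing consecutive equations (yielding exactly the paper's recursion $\pi_{(n,1)}=(1-pP_1)(1-P_3)\pi_{(n-1,1)}+p(1-P_1)P_2(1-\delta)P_3\,\delta^{n-2}$ via the same identity $c(1-P_3)+r=(1-pP_1)(1-P_3)$), and reassembly along the diagonal. The only organizational difference is that you carry the tail mass $S=\sum\nolimits_{k\geq2}\pi_{(k,1)}$ as an unknown scalar and fix $S=p(1-P_1)P_2$ by normalizing the full distribution at the end, whereas the paper obtains the same value by expanding the tail sum through the stationary equations and recognizing the normalization sum inside it --- the same idea, differently placed.
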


Provided all the stationary probabilities, the probability that the steady-state AoI equals $n$ can be obtained by
\begin{equation}
\Pr\{\Delta=n\}=\pi_{(n,0)} + \sum\nolimits_{m=1}^{n-1}\pi_{(n,m)} \qquad (n\geq2)
\end{equation}

The probability that AoI takes value 1 is determined as $pP_1$, which is known directly from the last line of equations (1). We determine the explicit expression of AoI distribution in following Theorem 3 and give the calculations in Appendix B.

\begin{Theorem}
For the considered relay-assisted status updating system, the stationary distribution of AoI $\Delta_{NB}^P$ is determined as
\begin{align}
\Pr\{\Delta_{NB}^P=n\}=\frac{(1-pP_1)P_3(1-\delta)}{(1-pP_1)P_3-p(1-P_1)P_2}\delta^{n-1} +\xi[(1-pP_1)(1-P_3)]^{n-1}  \quad   (n\geq1)
\end{align}
in which
\begin{align}
\xi=\frac{p[P_1+(1-P_1)P_2P_3]}{1-P_3} - \frac{P_3(1-\delta)\delta}{[(1-pP_1)P_3-p(1-P_1)P_2](1-P_3)}  \notag
\end{align}
\end{Theorem}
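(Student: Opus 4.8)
The plan is to substitute the closed forms of Theorem 2 into the marginalization identity (4) and then carry out the two finite geometric sums in $m$, collecting the outcome according to its distinct geometric bases in $n$. Before summing, I would record two elementary identities that drive the whole reduction. Expanding with $\delta=1-p(P_1+P_2-P_1P_2)$, the quantity $D:=\delta-(1-pP_1)(1-P_3)$ equals $(1-pP_1)P_3-p(1-P_1)P_2$, so the two denominators that appear in (5) are in fact the same number $D$; likewise $1-pP_1-\delta=p(1-P_1)P_2$, whence $1-r=p(1-P_1)P_2/(1-pP_1)$ for the ratio $r:=\delta/(1-pP_1)$. With these in hand the two sums are $\sum_{m=1}^{n-1}r^{m-1}=(1-r^{n-1})/(1-r)$ and $\sum_{m=1}^{n-1}(1-P_3)^{m-1}=\big(1-(1-P_3)^{n-1}\big)/P_3$.

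I would split the summand $\pi_{(n,m)}$ of (3) into its ``$\pi_{(2,1)}$ part,'' carrying the factor $r^{m-1}$, and its ``$C$ part'' with $C:=p(1-P_1)P_2P_3(1-\delta)\delta/D$, which carries a $(1-P_3)^{m-1}$ piece and an $r^{m-1}$ piece. After summing over $m$ and adding $\pi_{(n,0)}$ from (2), every resulting term is proportional to one of three geometric sequences: $\delta^{n-1}$, $[(1-pP_1)(1-P_3)]^{n-1}$, and the cross sequence $[(1-P_3)\delta]^{n-1}$ (the last one produced by the $(1-P_3)^{n-1}$ and $r^{n-1}$ tails of the finite sums together with the term already present in $\pi_{(n,0)}$). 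Reading off coefficients, the coefficient of $\delta^{n-1}$ is $(1-\delta)+C/(\delta P_3)=(1-\delta)\big(1+p(1-P_1)P_2/D\big)$, which collapses to $(1-pP_1)P_3(1-\delta)/D$ exactly because $D+p(1-P_1)P_2=(1-pP_1)P_3$; this is the first term of (5). The coefficient of $[(1-pP_1)(1-P_3)]^{n-1}$ works out to $(\pi_{(2,1)}-C)/\big(pP_2(1-P_1)(1-P_3)\big)$ after inserting the value of $1-r$, and substituting $\pi_{(2,1)}=p^2(1-P_1)P_2[P_1+(1-P_1)P_2P_3]$ reproduces $\xi$ verbatim.

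The step I expect to be the main obstacle is proving that the coefficient of the cross sequence $[(1-P_3)\delta]^{n-1}$ vanishes identically, since this is precisely what ensures that only the two advertised bases survive in (5). Gathering its contributions, one from $\pi_{(n,0)}$, one from the $r^{n-1}$ tail of the $\pi_{(2,1)}$ part, and the two tails of the $C$ part, reduces the vanishing to the identity $\xi=-p(1-P_1)P_2\,(D+1-\delta)/D$; equating this with the definition of $\xi$ is the one place where the explicit constant $\pi_{(2,1)}$ and the denominator identity must be used in full, and it is the genuinely computational heart of the argument. Finally I would treat $n=1$ separately: the $m$-sum is empty, so $\Pr\{\Delta_{NB}^P=1\}=\pi_{(1,0)}=(1-\delta)-p(1-P_1)P_2=pP_1$, which agrees with the value forced by the last line of (1) and shows that (5) holds for every $n\geq1$.
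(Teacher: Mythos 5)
Your proposal is correct and follows essentially the same route as the paper's own proof in Appendix B: substituting the closed forms of Theorem 2 into the marginalization (4), carrying out the two finite geometric sums in $m$, and collecting coefficients of the three bases $\delta^{n-1}$, $[(1-pP_1)(1-P_3)]^{n-1}$ and $[(1-P_3)\delta]^{n-1}$ using the identities $\delta-(1-pP_1)(1-P_3)=(1-pP_1)P_3-p(1-P_1)P_2$ and $1-pP_1-\delta=p(1-P_1)P_2$. Your constants $C$ and $D$ are exactly the paper's $\frac{p(1-P_1)P_2P_3(1-\delta)\delta}{\delta-(1-pP_1)(1-P_3)}$ and $\delta-(1-pP_1)(1-P_3)$, and your vanishing identity $\xi=-p(1-P_1)P_2(D+1-\delta)/D$ for the cross sequence is precisely the paper's condition $\eta_1+\eta_2+p(1-P_1)P_2=0$, which it verifies through equations (52)--(54).
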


\subsection{The stationary AoI: packet retransmission is not preempted in relay}

\begin{table}[!t]
\caption{Possible age-state transitions when there is no packet preemption in $R$}
\label{table4}
\centering
\begin{tabular}{c|c|c}
\Xhline{1pt}
Current state       &    $A=0,T_{R-d}$    &    $A=1,T_{s-d},T_{R-d}$ \\
\Xhline{1pt}
\multirowcell{4}{$(n,m)$,\\$n>m\geq1$}    &  $(0,0):(n+1,m+1)$     &  $(1,0,0):(n+1,m+1)$\\
\cline{2-3} &  $(0,1):(m+1,0)$         &  $(1,0,1):(m+1,0)$\\
\cline{2-3} &                                    &$(1,1,0):(1,0)$\\
\cline{2-3} &                                    &$(1,1,1):(1,0)$\\
\Xhline{1pt}
Current state       &    $A=0$    &    $A=1,T_{s-d},T_{s-R}$ \\
\Xhline{1pt}
\multirowcell{4}{$(n,0)$,\\$n\geq1$}    &  $(n+1,0)$     &  $(1,0,0):(n+1,0)$\\
\cline{2-3} &                                    &  $(1,0,1):(n+1,1)$\\
\cline{2-3} &                                    &$(1,1,0):(1,0)$\\
\cline{2-3} &                                    &$(1,1,1):(1,0)$\\
\Xhline{1pt}
\end{tabular}
\end{table}

In previous part of this Section, we have determined the stationary AoI distribution for the case where the packet in relay can be replaced. Intuitively, substituting the large-age packet with fresher ones will result to lower AoI and enhance the timeliness of the system. We will prove this conclusion by finding the steady-state AoI distribution for the $non-preemption$ case explicitly and do the comparisons. The AoI stochastic process constituted for this case is denoted as $AoI_{NB}^{NP}$, where the superscript $NP$ indicates that no packet preemption is allowed in relay $R$.

When the packet in relay cannot be preempted before the receiver gets an updating packet eventually, which may be obtained from $R-d$ link or the $s-d$ link, the discussions of random state transitions are easier, since the relay will always reject the packet from the source as long as there has been one packet within it.

We first describe the possible state transitions when the initial age-state is $(n,m)$, where $n>m\geq1$. If $A=1$, that is a new packet is sent out from the source in this time slot. Now, only the packet transmissions on $s-d$ and $R-d$ links need to be discussed, since the packet over $s-R$ link is always rejected. For the case $A=0$, we have to consider just the packet delivery over $R-d$ link. All the possible state transfers are listed in Table \ref{table4}, in which the state transfers from initial age-state $(n,0)$ are also given.

For each state transfer, we determine the corresponding transition probability in Table \ref{table5}, from which the stationary equations of the AoI process $AoI_{NB}^{NP}$ can be established, when the process reaches the steady state.

\begin{Theorem}
Assume that the AoI stochastic process $AoI_{NB}^{NP}$ reaches the steady-state, the stationary probabilities $\pi_{(n,m)}$ satisfy following equations
\begin{equation}
\begin{cases}
\pi_{(n,m)}=\pi_{(n-1,m-1)}[(1-p)(1-P_3)+p(1-P_1)(1-P_3)]   &    (n>m\geq2)    \\
\pi_{(n,1)}=\pi_{(n-1,0)}p(1-P_1)P_2                                        &   (n\geq2)   \\
\pi_{(n,0)}=\pi_{(n-1,0)}[(1-p)+p(1-P_1)(1-P_2)]    \\
\qquad \quad \quad      + \left(\sum\nolimits_{k=n}^{\infty}\pi_{(k,n-1)}\right)[(1-p)P_3+p(1-P_1)P_3]    &   (n\geq2)    \\
\pi_{(1,0)}= \left(\sum\nolimits_{n=1}^{\infty}\pi_{(n,0)} + \sum\nolimits_{n=2}^{\infty}\sum\nolimits_{m=1}^{n-1}\pi_{(n,m)}\right)pP_1
\end{cases}
\end{equation}
\end{Theorem}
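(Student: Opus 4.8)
The plan is to obtain each of the four stationary equations by the standard global-balance principle for a discrete-time Markov chain: for every target age-state I collect all of its one-step predecessors and sum each predecessor's stationary probability weighted by the transition probability listed in Table~\ref{table5}. Because preemption is now forbidden, a non-empty relay rejects every arriving packet, so the $s-R$ transition disappears from all full-relay rows of Table~\ref{table4}; consequently the predecessor sets are sparser than for $AoI_{NB}^{P}$, and the backward (who-can-reach-this-state) analysis is where the work lies.

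First I would treat the interior states $(n,m)$ with $n>m\geq 2$. A second component exceeding $1$ can only be inherited from a relay that already held a packet whose $R-d$ delivery failed, so Table~\ref{table4} shows the unique predecessor is $(n-1,m-1)$, reached through the event $T_{R-d}=0$ irrespective of arrival (any arriving packet is discarded). Combining the two contributing rows of Table~\ref{table5} gives the weight $(1-p)(1-P_3)+p(1-P_1)(1-P_3)$, which is the first line of the stated system.

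Next I would address the boundary $m=1$, where the decisive difference from the preemptive model appears: a fresh packet can occupy the relay only when it is empty, so the age-$1$ component is created solely from $(n-1,0)$ via the event $A=1,\,T_{s-d}=0,\,T_{s-R}=1$, contributing $p(1-P_1)P_2$ and producing the second line. For the empty-relay states $(n,0)$ with $n\geq 2$ there are exactly two predecessor families: the relay stays empty, coming from $(n-1,0)$ with weight $(1-p)+p(1-P_1)(1-P_2)$; or it empties itself by delivering its stored packet of age $n-1$ on the $R-d$ link, which may occur from any state $(k,n-1)$ consistent with $k>n-1$, i.e.\ $k\geq n$, each with weight $(1-p)P_3+p(1-P_1)P_3$. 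Summing over $k$ yields the third line. Finally, $(1,0)$ is entered exactly when the direct $s-d$ transmission succeeds, an event of probability $pP_1$ available from every state, so summing over the whole state space gives the last line; since the bracketed quantity is the total stationary mass it equals $1$, recovering $\pi_{(1,0)}=pP_1$.

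The main obstacle is bookkeeping rather than algebra: verifying that the predecessor families are exhaustive and disjoint. The two points needing genuine care are the summation range $k\geq n$ in the $(n,0)$ equation---justified because a relay packet of age $n-1$ forces the destination AoI to be strictly larger than $n-1$---and the confirmation that the no-preemption rule truly removes every $s-R$ transition from the full-relay rows, so that no stray term contaminates the $m\geq 1$ equations. Once the aggregated transition probabilities of Table~\ref{table5} are checked against Table~\ref{table4}, assembling the balance terms for each target state closes the proof.
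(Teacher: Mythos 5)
Your proposal is correct and follows essentially the same route as the paper's own proof: a predecessor-enumeration (global balance) argument for each class of states, using the aggregated transition probabilities of Table~\ref{table5}, with the key observations that no-preemption makes $(n-1,0)$ the sole predecessor of $(n,1)$, that the states $(k,n-1)$ with $k\geq n$ feed into $(n,0)$ via a successful $R-d$ delivery, and that the last equation collapses to $\pi_{(1,0)}=pP_1$ because the bracketed mass sums to one. No gaps; the bookkeeping points you flag are exactly the ones the paper's proof addresses.
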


\begin{proof}
Assume that the initial state is $(n-1,m-1)$, to get age-state $(n,m)$, it needs that no packet is obtained in both relay and destination, which is satisfied either no packet arrives at source and retransmission from relay fails, or a new packet comes to source but the packet transmissions on $s-d$ and $R-d$ links are unsuccessful. Since the packet in relay cannot be substituted, the state $(n,1)$ can only be obtained from $(n-1,0)$ by letting a new packet arrive to relay $R$ but fail to transmit to receiver $d$. This gives the first two lines of equations (6).

Similarly to the explanations for the second to last equation in (1), both the state $(n-1,0)$ and $(k,n-1)$ can transfer to $(n,0)$. Beginning with $(n-1,0)$, if no packet comes to source or the arriving packet is not sent to $R$ and $d$ successfully, no packet is obtained at $R$ or $d$. As a result, at the next time slot the AoI increases 1 while the second parameter remains 0. On the other hand, a packet of age $n-1$ can be sent to $d$ from $R$ which reduces the AoI to $n$. Meanwhile, still we have to ensure that the receiver gets no packet from the direct link $s-d$. Combining these two cases, we determine the stationary equation of the state vector $(n,0)$ where $n\geq2$.

Finally, it sees that still only one way can decrease the AoI to 1, i.e., the arriving packet is transmitted to $d$ through the direct link $s-d$, no matter what the original age-state is. As before, this equation yields that the stationary probability $\pi_{(1,0)}$ is $pP_1$ directly.

So far, we have explained all the equations in (6). This completes the proof of Theorem 4.
\end{proof}

\begin{table}[!t]
\caption{Random state transfers and transtion probabilities: non-preemption case}
\label{table5}
\centering
\begin{tabular}{c|c}
\Xhline{1pt}
From       & State transfers to at next slot/Transition probabilities \\
\Xhline{1pt}
\multirowcell{3}{$(n,m)$,\\$n>m\geq1$} & $(n+1,m+1):(1-p)(1-P_3) + p(1-P_1)(1-P_3)$\\
\cline{2-2} &  $(m+1,0):(1-p)P_3+p(1-P_1)P_3$\\
\cline{2-2} &  $(1,0):pP_1$\\
\hline
\multirowcell{3}{$(n,0)$,\\$n\geq1$} & $(n+1,0):(1-p)+p(1-P_1)(1-P_2)$\\
\cline{2-2} & $(n+1,1):p(1-P_1)P_2$\\
\cline{2-2} & $(1,0):pP_1$\\
\Xhline{1pt}
\end{tabular}
\end{table}

Solving the system of equations (6) is easier than (1), since \emph{no-packet-preemption} in relay $R$ dramatically simplifies how the age-state $(n,1)$ is obtained. We give the solutions of all the stationary probabilities $\pi_{(n,m)}$ in Theorem 5 and put the calculation details in Appendix C.

\begin{Theorem}
When the AoI process $AoI_{NB}^{NP}$ reaches the steady-state, the stationary probabilities $\pi_{(n,m)}$ are solved as
\begin{equation}
\pi_{(n,0)}=\beta_1\delta^{n-1} + \beta_2[(1-pP_1)(1-P_3)]^{n-1} \quad (n\geq1)
\end{equation}

Other probabilities $\pi_{(n,m)}$, $n>m\geq1$ are determined by
\begin{align}
\pi_{(n,m)}&=p(1-P_1)P_2\bigg\{\beta_1\delta^{n-2}\left(\frac{(1-pP_1)(1-P_3)^{m-1}}{\delta}\right) + \beta_2\left[ (1-pP_1)(1-P_3) \right]^{n-2}\bigg\},
\end{align}
in which the coefficients $\delta$, $\beta_1$ are $\beta_2$ are given by
\begin{align}
\delta&=1-p(P_1 + P_2 - P_1P_2),  \notag  \\  
\beta_1&= pP_1-\beta_2,   \\
\beta_2&= \frac{[1-(1-pP_1)(1-P_3)]p(1-P_1)P_2P_3}{[1-(1-pP_1)(1-P_3)+p(1-P_1)P_2]} \frac{1-pP_1}{(1-pP_1)(1-P_3)-\delta}.
\end{align}
\end{Theorem}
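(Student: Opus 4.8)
The plan is to exploit the almost-triangular structure of the balance system (6): every line except the one for $\pi_{(n,0)}$ expresses a probability through a single lower-index one, so I would first collapse the system onto the boundary sequence $x_n:=\pi_{(n,0)}$, solve the resulting scalar recursion, and then close it with a self-consistency condition.

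First I would absorb the transition weights using $(1-p)+p(1-P_1)=1-pP_1$. This identifies four constants: the diagonal coefficient $a:=(1-pP_1)(1-P_3)$ in the first line, the coefficient $b:=p(1-P_1)P_2$ in the $\pi_{(n,1)}$ line, the empty-relay self-persistence $(1-p)+p(1-P_1)(1-P_2)$, which is exactly $\delta=1-pP_1-b$, and the tail-sum weight $d:=(1-pP_1)P_3$. I would record at once the identities $1-\delta=pP_1+b$, $1-a=pP_1+d$, and, crucially, $a-\delta=b-d$; these are what make the closing algebra collapse. Iterating the first line of (6) down the diagonal gives $\pi_{(n,m)}=a^{m-1}\pi_{(n-m+1,1)}$, and the second line gives $\pi_{(n-m+1,1)}=b\,\pi_{(n-m,0)}$, so that
\[
\pi_{(n,m)}=a^{m-1}b\,x_{n-m}\qquad(n>m\ge1),
\]
which reduces every probability to $x_n$ and, once $x_n$ is known, already yields the asserted formula for $\pi_{(n,m)}$.

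Substituting this into the $\pi_{(n,0)}$ line of (6) is the decisive step. The tail sum telescopes, $\sum_{k\ge n}\pi_{(k,n-1)}=a^{n-2}b\sum_{j\ge1}x_j=a^{n-2}bS$ with $S:=\sum_{j\ge1}x_j$ independent of $n$, turning that line into the first-order linear recursion $x_n=\delta x_{n-1}+(dbS)\,a^{n-2}$ for $n\ge2$. Since $a\ne\delta$, its solution is $x_n=\beta_1\delta^{n-1}+\beta_2 a^{n-1}$ with particular-part coefficient $\beta_2=\frac{dbS}{a-\delta}$; this is the claimed geometric form of $\pi_{(n,0)}$. The last line of (6) together with normalization $\sum\pi=1$ forces $x_1=\pi_{(1,0)}=pP_1$, hence $\beta_1+\beta_2=pP_1$, which is the relation $\beta_1=pP_1-\beta_2$.

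The hard part is closing the loop on $S$, which simultaneously drives the recursion and equals the sum of the very sequence it generates. I would sum the two geometric series to obtain the consistency relation $S=\frac{\beta_1}{1-\delta}+\frac{\beta_2}{1-a}$, then combine it with $\beta_2=\frac{dbS}{a-\delta}$ and $\beta_1=pP_1-\beta_2$ into a single scalar equation for $\beta_2$. Writing $u:=pP_1$ and inserting $1-\delta=u+b$, $1-a=u+d$, $a-\delta=b-d$, the numerator factors as $(u+b)(u+d)$ and everything collapses to $S=\frac{1-a}{1-a+b}$ and to the stated closed form for $\beta_2$; the value of $\beta_1$ then follows. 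Finally I would verify normalization directly as a sanity check: the off-diagonal mass is $\sum_{m\ge1}a^{m-1}bS=\frac{bS}{1-a}$, so the total mass equals $S\cdot\frac{1-a+b}{1-a}=1$, which both justifies the use of $x_1=pP_1$ and completes the argument.
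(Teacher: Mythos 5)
Your proposal is correct and follows essentially the same route as the paper's Appendix C: collapse everything onto the boundary sequence $\pi_{(n,0)}$ via $\pi_{(n,m)}=p(1-P_1)P_2\,[(1-pP_1)(1-P_3)]^{m-1}\pi_{(n-m,0)}$, turn the tail sum into a geometric forcing term $a^{n-2}bS$, and solve the resulting first-order linear recursion with the initial condition $\pi_{(1,0)}=pP_1$. The only (cosmetic) difference is the order of closure: the paper pins down $S=\sum_k\pi_{(k,0)}=\frac{1-(1-pP_1)(1-P_3)}{1-(1-pP_1)(1-P_3)+p(1-P_1)P_2}$ directly from normalization \emph{before} solving the recursion, whereas you solve with $S$ symbolic and recover the same value afterwards from the self-consistency relation $S=\frac{\beta_1}{1-\delta}+\frac{\beta_2}{1-a}$ — both are valid paths through the same linear system and your collapsing algebra (using $a-\delta=b-d$, $1-\delta=pP_1+b$, $1-a=pP_1+d$) checks out.
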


Now, the stationary distribution of AoI, which we denote as $\Delta_{NB}^{NP}$ can be determined.
\begin{Theorem}
Assume that the packet in relay cannot be substituted by later packets from source, then for $n\geq1$, the stationary AoI distribution of relay-assisted status updating system is given as
\begin{align}
\Pr\{\Delta_{NB}^{NP}=n\}&=\left(\beta_1 + \frac{p(1-P_1)P_2}{\delta-(1-pP_1)(1-P_3)}\beta_1\right) \delta^{n-1}  \notag  \\
& \quad + \left(\beta_2 - \frac{p(1-P_1)P_2}{\delta-(1-pP_1)(1-P_3)}\beta_1\right) [(1-pP_1)(1-P_3)]^{n-1}  \notag \\
& \quad + p(1-P_1)P_2\beta_2(n-1)[(1-pP_1)(1-P_3)]^{n-2}
\end{align}
\end{Theorem}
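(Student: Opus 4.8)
The plan is to substitute the closed-form stationary probabilities of Theorem 5 into the marginalization identity (4), namely $\Pr\{\Delta_{NB}^{NP}=n\}=\pi_{(n,0)}+\sum_{m=1}^{n-1}\pi_{(n,m)}$ for $n\geq 2$, and to evaluate the finite sum explicitly. Writing $\rho=(1-pP_1)(1-P_3)$ for brevity, Theorem 5 presents the joint probability $\pi_{(n,m)}$ as a sum of two pieces sharing the prefactor $p(1-P_1)P_2$: a first piece $\beta_1\,\delta^{\,n-1-m}\rho^{\,m-1}$ (equivalently $\beta_1\delta^{\,n-2}(\rho/\delta)^{\,m-1}$) whose $\delta$- and $\rho$-exponents both move with the summation index $m$, and a second piece $\beta_2\,\rho^{\,n-2}$ that does not involve $m$ at all. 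First I would split $\sum_{m=1}^{n-1}\pi_{(n,m)}$ into these two parts and treat them separately; this separation is what makes the computation routine.

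For the first part the relevant quantity is the geometric sum $\sum_{m=1}^{n-1}\delta^{\,n-1-m}\rho^{\,m-1}$. Reindexing by $j=m-1$ turns it into $\delta^{\,n-2}\sum_{j=0}^{n-2}(\rho/\delta)^{j}$, which collapses to $\frac{\delta^{\,n-1}-\rho^{\,n-1}}{\delta-\rho}$ whenever $\delta\neq\rho$. Hence the first part contributes $p(1-P_1)P_2\,\beta_1\,\frac{\delta^{\,n-1}-\rho^{\,n-1}}{\delta-\rho}$, a combination of the two geometric modes $\delta^{\,n-1}$ and $\rho^{\,n-1}$.

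For the second part the summand $p(1-P_1)P_2\,\beta_2\,\rho^{\,n-2}$ is constant in $m$, so summing over the $n-1$ admissible values of $m$ merely multiplies it by $n-1$, producing $p(1-P_1)P_2\,\beta_2\,(n-1)\,\rho^{\,n-2}$. This \emph{linear-in-$n$} factor is exactly the feature that separates the non-preemption distribution from the purely geometric shape found in the preemption case of Theorem 3, and it is the one qualitatively new point worth flagging. Adding back $\pi_{(n,0)}=\beta_1\delta^{\,n-1}+\beta_2\rho^{\,n-1}$ and collecting coefficients then reproduces the claimed three-term expression: the coefficient of $\delta^{\,n-1}$ becomes $\beta_1+\frac{p(1-P_1)P_2}{\delta-\rho}\beta_1$, that of $\rho^{\,n-1}$ becomes $\beta_2-\frac{p(1-P_1)P_2}{\delta-\rho}\beta_1$, and the residual term is $p(1-P_1)P_2\,\beta_2\,(n-1)\,\rho^{\,n-2}$.

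Finally I would check the boundary value $n=1$, which identity (4) does not cover. There the sum is empty, so $\Pr\{\Delta_{NB}^{NP}=1\}=\pi_{(1,0)}=\beta_1+\beta_2=pP_1$, using $\beta_1=pP_1-\beta_2$ from Theorem 5; and the claimed formula also returns $\beta_1+\beta_2$ at $n=1$, since its two geometric coefficients sum to $\beta_1+\beta_2$ while the last term vanishes through its $(n-1)$ factor (the formally negative exponent $\rho^{\,n-2}$ there being harmless, as it is multiplied by zero). There is no genuine obstacle in this argument—it is essentially bookkeeping—so the only care required lies in the geometric-sum evaluation and, above all, in noticing that the $\beta_2$ piece of $\pi_{(n,m)}$ is independent of $m$, which is the source of the distinctive $(n-1)\rho^{\,n-2}$ term.
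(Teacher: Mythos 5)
Your proposal is correct and follows essentially the same route as the paper's proof of Theorem 6: split $\sum_{m=1}^{n-1}\pi_{(n,m)}$ into the $\beta_1$ piece, evaluated as the geometric sum $\frac{\delta^{n-1}-[(1-pP_1)(1-P_3)]^{n-1}}{\delta-(1-pP_1)(1-P_3)}$, and the $m$-independent $\beta_2$ piece, which yields the $(n-1)[(1-pP_1)(1-P_3)]^{n-2}$ term, then add $\pi_{(n,0)}$ and verify $n=1$ via $\beta_1+\beta_2=pP_1$. No gaps; the bookkeeping and the boundary check match the paper's argument step for step.
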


\begin{proof}
As before, for $n\geq2$, we first calculate the sum
\begin{align}
&\sum\nolimits_{m=1}^{n-1}\pi_{(n,m)}  \notag \\
={}& \sum\nolimits_{m=1}^{n-1} p(1-P_1)P_2\bigg\{\beta_1\delta^{n-2}\left(\frac{(1-pP_1)(1-P_3)^{m-1}}{\delta}\right) + \beta_2\left[ (1-pP_1)(1-P_3) \right]^{n-2}\bigg\}   \notag \\
={}& \frac{p(1-P_1)P_2\beta_1}{\delta-(1-pP_1)(1-P_3)}\left\{\delta^{n-1}-[(1-pP_1)(1-P_3)]^{n-1}\right\}  \notag \\
  {}& \qquad + p(1-P_1)P_2\beta_2(n-1)[(1-pP_1)(1-P_3)]^{n-2}
\end{align}

Thus, we have that
\begin{align}
&\Pr\{\Delta_{NB}^{NP}=n\}  \notag \\
={}& \pi_{(n,0)} + \sum\nolimits_{m=1}^{n-1}\pi_{(n,m)}  \notag \\
={}& \left(\beta_1 + \frac{p(1-P_1)P_2}{\delta-(1-pP_1)(1-P_3)}\beta_1\right) \delta^{n-1} + \left(\beta_2 - \frac{p(1-P_1)P_2}{\delta-(1-pP_1)(1-P_3)}\beta_1\right)  [(1-pP_1)(1-P_3)]^{n-1}  \notag \\
  {}& \qquad + p(1-P_1)P_2\beta_2(n-1)[(1-pP_1)(1-P_3)]^{n-2}
\end{align}
where the probability expression (7) is substituted.

Let $n=1$, equation (13) reduces to $\beta_1 + \beta_2=pP_1$ from (9), which equals $\pi_{(1,0)}$ and is also the probability that AoI takes value 1. Therefore, (11) is valid for all $n\geq1$. This completes the proof of Theorem 6.
\end{proof}

\section{Characterizing the Age of Information When relay has size 1 buffer}
In the previous Section, we have discussed the relay-assisted system where the relay has no buffer. The explicit expressions of stationary AoI distribution was derived for two cases in which we assume that the new packet can and cannot preempt the packet retransmissions from relay. In this Section, we will consider the steady-state AoI when the relay is equipped with a size 1 buffer, which is used to store the arriving packet if there is already one packet in relay. In addition, the packet in buffer is allowed to be updated by later ones from source $s$.

The purpose of this Section is two-fold. On the one hand, we will prove that the AoI distribution can still be determined by constituting a multiple-dimensional stochastic process and solving its steady state. Hence, not only the mean of AoI, but all the AoI performances can be computed. On the other hand, finding out the influences of \emph{storing-another-packet} on the system's AoI is practically meaningful, which directly guides the design of a superior statue updating system.

\begin{table}[!t]
\caption{The age-state transfers of relay-assisted status updating system: relay has size 1 buffer}
\label{table6}
\centering
\begin{tabular}{c|c|c}
\Xhline{1pt}
Current state vector       &     $A=0,T_{R-d}$    &    $A=1,T_{s-d},T_{s-R},T_{R-d}$ \\
\Xhline{1pt}
\multirowcell{8}{$(n,m,l)$,\\$n>m>l\geq1$}    &  $(0,0):(n+1,m+1,l+1)$     &  $(1,0,0,0):(n+1,m+1,l+1)$\\
\cline{2-3} &  $(0,1):(m+1,l+1,0)$         &  $(1,0,0,1):(m+1,l+1,0)$\\
\cline{2-3} &                                           & $(1,0,1,0):(n+1,m+1,1)$\\
\cline{2-3} &                                           & $(1,0,1,1):(m+1,1,0)$\\
\cline{2-3} &                                           & $(1,1,0,0):(1,0,0)$\\
\cline{2-3} &                                           & $(1,1,0,1):(1,0,0)$\\
\cline{2-3} &                                           & $(1,1,1,0):(1,0,0)$\\
\cline{2-3} &                                           & $(1,1,1,1):(1,0,0)$\\
\Xhline{1pt}
Current state vector        &    $A=0,T_{R-d}$     &    $A=1,T_{s-d},T_{s-R},T_{R-d}$ \\
\Xhline{1pt}
\multirowcell{8}{$(n,m,0)$,\\$n>m\geq1$}    &  $(0,0):(n+1,m+1,0)$     &  $(1,0,0,0):(n+1,m+1,0)$\\
\cline{2-3} &   $(0,1):(m+1,0,0)$           &  $(1,0,0,1):(m+1,0,0)$\\
\cline{2-3} &                                           &$(1,0,1,0):(n+1,m+1,1)$\\
\cline{2-3} &                                           &$(1,0,1,1):(m+1,1,0)$\\
\cline{2-3} &                                           &  $(1,1,0,0):(1,0,0)$\\
\cline{2-3} &                                           &$(1,1,0,1):(1,0,0)$\\
\cline{2-3} &                                           &$(1,1,1,0):(1,0,0)$\\
\cline{2-3} &                                           &  $(1,1,1,1):(1,0,0)$\\
\Xhline{1pt}
Current state vector        &    $A=0$    &    $A=1,T_{s-d},T_{s-R}$ \\
\Xhline{1pt}
\multirowcell{4}{$(n,0,0)$,\\$n\geq1$}    &  $(n+1,0,0)$     &  $(1,0,0):(n+1,0,0)$\\
\cline{2-3} &                                    &  $(1,0,1):(n+1,1,0)$\\
\cline{2-3} &                                    &$(1,1,0):(1,0,0)$\\
\cline{2-3} &                                    &$(1,1,1):(1,0,0)$\\
\Xhline{1pt}
\end{tabular}
\end{table}

Here, a three-dimensional state vector $(n,m,l)$ is defined, and the three-dimensional stochastic process $AoI_B^{P}$ is constituted accordingly. The added parameter $l$ denotes the age of packet in relay's buffer. To describe the age-state transfers, again we list all the cases using a Table \ref{table6}.

\begin{table}[!t]
\caption{Random state transfers and the corresponding transtion probabilities: relay has size 1 buffer}
\label{table7}
\centering
\begin{tabular}{c|c}
\Xhline{1pt}
Initial age-state      &    State vector at next time slot/Transition probabilities \\
\Xhline{1pt}
\multirowcell{5}{$(n,m,l)$,\\$n>m>l\geq1$}      &      $(n+1,m+1,l+1):(1-p)(1-P_3) + p(1-P_1)(1-P_2)(1-P_3)$\\
\cline{2-2}    &    $(m+1,l+1,0):(1-p)P_3+p(1-P_1)(1-P_2)P_3$\\
\cline{2-2}    &    $(n+1,m+1,1):p(1-P_1)P_2(1-P_3)$\\
\cline{2-2}    &    $(m+1,1,0):p(1-P_1)P_2P_3$\\
\cline{2-2}    &    $(1,0,0):pP_1$\\
\hline
\multirowcell{5}{$(n,m,0)$,\\$n>m\geq1$}        &      $(n+1,m+1,0):(1-p)(1-P_3)+p(1-P_1)(1-P_2)(1-P_3)$\\
\cline{2-2}    &   $(m+1,0,0):(1-p)P_3+p(1-P_1)(1-P_2)P_3$\\
\cline{2-2}    &   $(n+1,m+1,1):p(1-P_1)P_2(1-P_3)$\\
\cline{2-2}    &   $(m+1,1,0):p(1-P_1)P_2P_3$\\
\cline{2-2}    &   $(1,0,0):pP_1$\\
\hline
\multirowcell{3}{$(n,0,0)$,\\$n\geq1$}              &       $(n+1,0,0):(1-p)+p(1-P_1)(1-P_2)$\\
\cline{2-2}    &   $(n+1,1,0):p(1-P_1)P_2$\\
\cline{2-2}    &   $(1,0,0):pP_1$\\
\Xhline{1pt}
\end{tabular}
\end{table}

The state transfers are described assuming that the process $AoI_B^{P}$ has different initial state vector, i.e., $(n,m,l)$, $(n,m,0)$ and $(n,0,0)$. Actually, we list all the state transitions in the order of number of packets in system. For instance, age-state $(n,m,l)$ means that the current value of AoI is $n$ and there are two packets of age $m$ and $l$ in relay, one is transmitted on $R-d$ link and the other is waiting in buffer. At the next time slot, if $A=0$, that is no packet is obtained at source, only the packet transmission over $R-d$ link needs to be considered, since no packet is delivered on both $s-d$ and $s-R$ links. When $A=1$, we have to describe whether the new packet is successfully transmitted via $s-d$ and $s-R$ links and in relay $R$, if the packet having age $m$ is obtained at destination $d$. This gives first eight rows of Table \ref{table6}. All the state tranfers from other initial age-states $(n,m,0)$ and $(n,0,0)$ are discussed similarly.

Notice that still only one way can reduce the AoI to 1, that is a new updating packet is sent to $d$ via the direct link $s-d$. Since we have assumed that when everytime a packet is obtained from $s-d$ link, all the packets in system are deleted, which makes the other two state components $m$ and $l$ jump to 0. Therefore, we can determine that the first stationary probability $\pi_{(1,0,0)}$ is equal to $pP_1$ immediately.

According to Table \ref{table6}, in Table \ref{table7} we summarize every state transfer and determine the transition probability, from which the stationary equations of AoI stochastic process $AoI_B^P$ can be established when the process reaches the steady state.

For the sake of simplicity, in this Section we denote that
\begin{align}
\delta=(1-p)+p(1-P_1)(1-P_2) =1-p(P_1+P_2-P_1P_2)  \notag
\end{align}
and
\begin{equation}
\eta=p(1-P_1)P_2   \notag
\end{equation}

\begin{Theorem}
Assume that the three-dimensional stochastic process $AoI_B^P$ runs in steady state, the stationary probabilities $\pi_{(n,m,l)}$ of all the age-states satisfy following equations
\begin{equation}
\begin{cases}
\pi_{(n,m,l)}=\pi_{(n-1,m-1,l-1)}\delta(1-P_3)     &  \qquad   (n>m>l\geq2)    \\
\pi_{(n,m,1)}=\left(\sum\nolimits_{j=0}^{m-2}\pi_{(n-1,m-1,j)}\right)\eta(1-P_3)  & \qquad   (n>m\geq2)    \\
\pi_{(n,m,0)}=\pi_{(n-1,m-1,0)}\delta(1-P_3) +\left(\sum\nolimits_{k=n}^{\infty}\pi_{(k,n-1,m-1)}\right)\delta P_3  & \qquad  (n>m\geq2)  \\
\pi_{(n,1,0)}=\pi_{(n-1,0,0)}\eta+\left(\sum\nolimits_{k=n}^{\infty}\sum\nolimits_{j=0}^{n-2}\pi_{(k,n-1,j)}\right)\eta P_3 & \qquad (n\geq2)    \\
\pi_{(n,0,0)}=\pi_{(n-1,0,0)}\delta+\left(\sum\nolimits_{k=n}^{\infty}\pi_{(k,n-1,0)}\right)\delta P_3  &   \qquad (n\geq2)  \\
\pi_{(1,0,0)}=\Big(\sum\nolimits_{n=1}^{\infty}\pi_{(n,0,0)}+ \sum\nolimits_{n=2}^{\infty}\sum\nolimits_{m=1}^{n-1}\pi_{(n,m,0)}\\
\qquad \qquad \qquad \qquad \qquad + \sum\nolimits_{n=3}^{\infty}\sum\nolimits_{m=2}^{n-1}\sum\nolimits_{l=1}^{m-1}\pi_{(n,m,l)}\Big)pP_1
\end{cases}
\end{equation}
\end{Theorem}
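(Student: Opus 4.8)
The plan is to obtain each line of the stated system as a \emph{global balance equation} of the discrete-time Markov chain $AoI_B^P$: in steady state every stationary probability equals the total probability flowing into that state in one slot, namely
\[
\pi_{(n,m,l)}=\sum_{(n',m',l')}\pi_{(n',m',l')}\,P_{(n',m',l'),(n,m,l)},
\]
where the one-step transition probabilities are exactly those tabulated in Table \ref{table7}. The whole proof therefore reduces to reading Table \ref{table7} \emph{backwards}: for each target age-state I identify every predecessor that can jump to it in one slot and sum the corresponding inflows. I would organize the argument by the structure of the target state, treating in turn the interior states $(n,m,l)$ with $l\geq2$, the buffer-age-one states $(n,m,1)$, the zero-buffer states $(n,m,0)$ with $m\geq2$, the states $(n,1,0)$, the empty-relay states $(n,0,0)$, and finally the reset state $(1,0,0)$, mirroring the case analysis already used in the proofs of Theorems 1 and 4.

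For the interior states I note that the only move in Table \ref{table7} producing a third component at least $2$ is the pure age-increment transition $(n',m',l')\to(n'+1,m'+1,l'+1)$, which fires with probability $\delta(1-P_3)$; hence $(n,m,l)$ with $l\geq2$ has the single predecessor $(n-1,m-1,l-1)$, giving the first line. A third component equal to $1$ is instead created only by a fresh $s-R$ success with no delivery at $d$, i.e.\ the transition to $(n'+1,m'+1,1)$ of probability $\eta(1-P_3)$; since this move is available from a predecessor $(n-1,m-1,j)$ carrying \emph{any} buffer age $j$, summing over the admissible range $0\leq j\leq m-2$ yields the second line.

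The remaining zero-buffer equations each combine two qualitatively different inflows: a pure age-increment contribution and a \emph{collapse} contribution coming from a successful $R-d$ delivery, which resets the AoI to one more than the delivered packet's age. For $(n,m,0)$ with $m\geq2$ the age-increment predecessor is $(n-1,m-1,0)$ with probability $\delta(1-P_3)$, while the delivery move $(m'+1,l'+1,0)$ of probability $\delta P_3$ may originate from any $(k,n-1,m-1)$ with $k\geq n$; summing these gives the third line. The same mechanism with the $s-R$-success variants $\eta$ and $\eta P_3$ produces the $(n,1,0)$ equation, whose collapse term now ranges over both $k\geq n$ and every buffer age $0\leq j\leq n-2$; and the plain variants $\delta$, $\delta P_3$ (no freshly buffered packet) produce the $(n,0,0)$ equation. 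Finally, since the direct link $s-d$ is the unique way to reach age $1$ and it flushes the whole relay, every state feeds $(1,0,0)$ with probability $pP_1$, giving the last line and incidentally $\pi_{(1,0,0)}=pP_1$, as the bracketed sum of all stationary probabilities equals $1$.

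The computations themselves are routine; the main obstacle is the combinatorial bookkeeping. The delicate points are (i) recognizing that a delivery collapses the first coordinate, so the correct set of predecessors is an \emph{infinite} family $\{(k,\cdot,\cdot):k\geq n\}$ rather than a single state, which is what forces the sums $\sum_{k=n}^{\infty}$; and (ii) getting the buffer-age ranges right, since a new age-one buffer packet can be created on top of a predecessor already carrying an older buffered packet, forcing the inner sums $\sum_{j=0}^{m-2}$ and $\sum_{j=0}^{n-2}$. Checking that, for each target state, these predecessor families are \emph{exhaustive and mutually disjoint}, so that no inflow is double-counted or omitted, is the only step that genuinely requires care.
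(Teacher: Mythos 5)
Your proposal is correct and follows essentially the same route as the paper's proof of this theorem: both arguments establish each line as a global balance equation by reading the transition table (Table \ref{table7}) backwards, identifying for each target age-state its full set of predecessors — the single age-increment predecessor for interior states, the range $0\leq j\leq m-2$ of buffer ages feeding $(n,m,1)$, the infinite families $\{(k,\cdot,\cdot):k\geq n\}$ created by $R$--$d$ delivery collapses, and the universal inflow $pP_1$ into $(1,0,0)$. No gaps; the case decomposition, predecessor sets, and transition probabilities all match the paper's argument.
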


\begin{proof}
We explain each line of equations (14) briefly. For the cases $n>m>l\geq2$, the state $(n,m,l)$ can only be obtained from $(n-1,m-1,l-1)$ when both $d$ and $R$ obtains no packet in the current time slot. No packet comes to $d$ ensures that the AoI does not drop, while the age of packet stored in relay's buffer also increases 1 as long as no packet is sent to $R$ from $s$ successfully.

Since the packet contained in relay's buffer can be replaced by later updating packets, apart from $(n-1,m-1,0)$, state vectors of form $(n-1,m-1,j)$ where $1\leq j\leq m-2$ can also transfer to $(n,m,1)$. These state transfers occur if the AoI at $d$ does not decrease, and the packet in relay's buffer is updated by a new packet at the same time. This gives the second line of (14).

Age-state $(n,m,0)$ means that the buffer in $R$ is empty, which can be transferred to in two ways. At first, if no packet arrives to $R$ and $d$, then state vector $(n-1,m-1,0)$ transfers to $(n,m,0)$ after one time slot naturally. Apart from this, notice that when the packet retransmission is successful via $R-d$ link, the AoI is reduced and this packet is then deleted from $R$. Next time the packet in buffer is transmitted. As a result, any state having form $(k,n-1,m-1)$, $k\geq n$ will change to $(n,m,0)$ as well. Combining both cases, we obtain the third equation in (14). When considering the state $(n,1,0)$, remember that the packet in buffer can be replaced, making the last parameter $l$ reduce to 1. Similar to above discussions, we see that all the age-states of form $(k,n-1,j)$ where $k\geq n$ and $1\leq j\leq n-2$ can transfer to $(n,1,0)$, so that in this case we have a double sum in the stationary equation.

The empty state $(n,0,0)$ can be obtained from either $(n-1,0,0)$ when no packet arrives to $d$ and $R$, or age-state $(k,n-1,0)$, $k\geq n$ assuming that the packet of age $n-1$ is sent to $d$ through $R-d$ link and no other packet comes to relay $R$. Substituting the transition probabilities in Table \ref{table7}, the stationary equation corresponding to $(n,0,0)$ can be determined. At last, the AoI is reduced to 1 from any initial age-state when a new packet is delivered to $d$ through the direct link $s-d$, which yields the last line of equations (14) and in fact determines that $\pi_{(1,0,0)}=pP_1$.

This completes the proof of Theorem 7.
\end{proof}

Compared with the \emph{no-buffer} cases discussed before, here we have to find all the stationary probabilities $\pi_{(n,m,l)}$, $n>m>l\geq0$. Then, the probability that stationary AoI, $\Delta_B^P$, takes value $n$ is determined by the formula
\begin{align}
\Pr\{\Delta_B^P=n\}=\pi_{(n,0,0)} &+\sum\nolimits_{m=1}^{n-1}\pi_{(n,m,0)} +\sum\nolimits_{m=2}^{n-1}\sum\nolimits_{l=1}^{m-1}\pi_{(n,m,l)}
\end{align}

\begin{Theorem}
The solutions of equations (14), i.e., all the stationary probabilities $\pi_{(n,m,l)}$ of three-dimensional stochastic process $AoI_B^P$ are given as follows. First of all,
\begin{align}
\pi_{(n,0,0)}&=\left(pP_1+\eta\right)\delta^{n-1} -\left\{\eta-(1-S)\eta P_3\right\}[\delta(1-P_3)]^{n-1}  \notag \\
& \qquad \qquad \qquad \qquad   -(1-S)\eta P_3 n[\delta(1-P_3)]^{n-1}  \qquad  (n\geq1)
\end{align}

For $n>m\geq0$,
\begin{align}
&\pi_{(n,m,0)}  \notag \\
={}&\eta\left(pP_1+\eta\right)\delta^{n-2}\left(1-P_3\right)^{m-1}-\eta^2\left[\delta(1-P_3)\right]^{n-2}- (1-S)\eta^2P_3(n-m-1)\left[\delta(1-P_3)\right]^{n-2}  \notag \\
  {}& +\widetilde{S}\eta P_3 \left[(\delta+\eta)(1-P_3)\right]^{n-2} m \left(\frac{\delta}{\delta+\eta}\right)^{m-1} - (1-S)\delta\eta P_3^2 m[\delta(1-P_3)]^{n-2}
\end{align}

The probability $\pi_{(n,m,l)}$, $n>m>l\geq1$ is solved as
\begin{align}
\pi_{(n,m,l)}&= \eta^2(1-P_3)\bigg\{ \left(pP_1+\eta\right)\delta^{n-3}\left(\frac{(\delta+\eta)(1-P_3)}{\delta}\right)^{m-2} \left(\frac{\delta}{\delta+\eta}\right)^{l-1}   \notag \\
& \qquad -\eta[\delta(1-P_3)]^{n-3}\left(\frac{\delta+\eta}{\delta}\right)^{m-2} \left(\frac{\delta}{\delta+\eta}\right)^{l-1} \notag \\
& \qquad -(1-S)\eta P_3(n-m-1)    [\delta(1-P_3)]^{n-3}\left(\frac{\delta+\eta}{\delta}\right)^{m-2}\left(\frac{\delta}{\delta+\eta}\right)^{l-1} \bigg\}  \notag \\
& \quad + \eta P_3\widetilde{S}[(\delta+\eta)(1-P_3)]^{n-2}\left(\frac{\delta}{\delta+\eta}\right)^{l-1} - \eta P_3\widetilde{S}[(\delta+\eta)(1-P_3)]^{n-2}\left(\frac{\delta}{\delta+\eta}\right)^{m-1}  \notag \\
& \quad -\delta\eta P_3^2(1-S)[\delta(1-P_3)]^{n-2}\left(\frac{\delta+\eta}{\delta}\right)^{m-1} \left(\frac{\delta}{\delta+\eta}\right)^{l-1}  \notag \\
&\quad + \delta\eta P_3^2(1-S)[\delta(1-P_3)]^{n-2}
\end{align}
where the numbers $S$ and $\widetilde{S}$ are determined by
\begin{align}
\widetilde{S}&=S\eta+(1-S)(\delta+\eta)P_3,   \\
S&=\frac{pP_1[1-\delta(1-P_3)]^2+\delta\eta P_3^2}{(1-\delta)[1-\delta(1-P_3)]^2-\delta\eta P_3(1-\delta)(1-P_3)}
\end{align}
\end{Theorem}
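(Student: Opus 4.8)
The plan is to solve the linear system (14) constructively, exploiting its triangular structure in the buffer coordinate $l$, so that the explicit forms (16)--(18) emerge layer by layer rather than being guessed and verified.

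First I would collapse the interior of the state space. The first line of (14), namely $\pi_{(n,m,l)}=\delta(1-P_3)\pi_{(n-1,m-1,l-1)}$ for $n>m>l\geq 2$, is a pure diagonal recursion; iterating it down to the plane $l=1$ gives
\[
\pi_{(n,m,l)}=[\delta(1-P_3)]^{l-1}\,\pi_{(n-l+1,\,m-l+1,\,1)},\qquad n>m>l\geq 1,
\]
so every probability with $l\geq 1$ is fixed by the two-index slice $\pi_{(\cdot,\cdot,1)}$. This reduces the unknowns to three families: the empty sequence $\pi_{(n,0,0)}$, the face $\pi_{(n,m,0)}$, and the slice $\pi_{(n,m,1)}$. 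Feeding the diagonal formula into the finite sum $\sum_{j=0}^{m-2}\pi_{(n-1,m-1,j)}$ that drives the second line of (14) turns it into a closed geometric expression, coupling $\pi_{(\cdot,\cdot,1)}$ back to $\pi_{(\cdot,\cdot,0)}$ and $\pi_{(\cdot,0,0)}$ only.

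The decisive simplification concerns the infinite tail sums $\sum_{k=n}^{\infty}$ in lines three, four and five, which record the AoI reset caused by a successful $R$--$d$ delivery. Because the support constraint $n>m>l$ forces $\pi_{(k,n-1,m-1)}=0$ whenever $k\leq n-1$, each such tail sum is in fact a full AoI-marginal: writing $R_{a,b}:=\sum_{k>a}\pi_{(k,a,b)}$ one has $\sum_{k\geq n}\pi_{(k,n-1,m-1)}=R_{n-1,m-1}$, and likewise the double sum in line four equals $\sum_{j=0}^{n-2}R_{n-1,j}$. I would then obtain one- and two-index recursions for the marginals $R_{m,l}$ by summing the stationary equations over the top index $n$; these are first-order linear recurrences whose forcing terms are geometric in the rates $\delta$, $\delta(1-P_3)$ and $(\delta+\eta)(1-P_3)$. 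Solving them produces exactly the geometric building blocks of (16)--(18), while the resonant terms such as $n[\delta(1-P_3)]^{n-1}$ and $(n-m-1)[\delta(1-P_3)]^{n-2}$ arise precisely where a forcing rate coincides with a homogeneous root, forcing a particular solution linear in the index.

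The remaining task is to pin down the free constants. The boundary value $\pi_{(1,0,0)}=pP_1$ (the last line of (14), which forces it immediately) propagates through the recurrences to fix the geometric coefficients, and the self-consistency requirement that the candidate solution reproduce the very marginals $R_{m,l}$ it was built from yields a single scalar equation; solving it gives $S$ in (20), after which $\widetilde S$ in (19) is merely a definition. I expect the main obstacle to be the bookkeeping around the nested double marginal $\sum_{j=0}^{n-2}R_{n-1,j}$ in line four, whose two-fold summation interacts with the diagonal reduction and is what ultimately produces the asymmetric dependence on $m$ and $l$ seen in (18); the crux is to verify that one value of $S$ renders all five families in (14) simultaneously consistent together with normalization $\sum\pi=1$. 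As a cleaner but more laborious alternative, one may instead substitute the closed forms (16)--(18) directly into (14) and check each line together with normalization.
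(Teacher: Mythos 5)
Your proposal is correct and follows essentially the same route as the paper's proof in Appendix D: collapse the diagonal recursion to reduce everything to the $l=0$ and $l=1$ slices, recognize the infinite tail sums as marginals of the middle coordinate (your $\sum_{j}R_{n-1,j}$ is exactly the paper's $t_{n-1}$), solve the resulting first-order linear recurrences whose resonant forcing produces the terms linear in $n$, and pin down $S$ by the same self-consistency equation obtained from summing the $\pi_{(n,0,0)}$ recursion together with $\pi_{(1,0,0)}=pP_1$ and the marginal normalization. The only difference is bookkeeping: you carry two-index marginals $R_{m,l}$ where the paper immediately collapses them to the one-index marginal $t_n$ via the diagonal reduction, which your own scheme would also discover.
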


We will solve the stationary equations (14) in Appendix D.

Provided all the stationary probabilities, the distribution of steady-state AoI can be obtained by equation (15). The calculation details are given in Appendix E.

\begin{Theorem}
Assume that the relay has a buffer of size 1, for $n\geq1$, the steady-state AoI $\Delta_B^P$ of the relay-assisted status updating system is distributed as
\begin{align}
\Pr\{\Delta_B^P=n\}&=\frac{(\delta+\eta)P_3(pP_1+\eta)}{\delta-(\delta+\eta)(1-P_3)}\delta^{n-1}-c_1[(\delta+\eta)(1-P_3)]^{n-1} +c_2[\delta(1-P_3)]^{n-1} \notag \\
& \quad  +\frac{(1-S)(\delta+\eta)P_3^2}{1-P_3}n[\delta(1-P_3)]^{n-1} + \frac{P_3\widetilde{S}}{1-P_3}n[(\delta+\eta)(1-P_3)]^{n-1}
\end{align}
where the coefficients $c_1$ and $c_2$ are given as
\begin{align}
c_1&=\frac{pP_1\eta(1-P_3)+\delta\eta P_3}{[\delta-(\delta+\eta)(1-P_3)](1-P_3)} + \frac{\delta P_3[\eta+(\delta+\eta)P_3](1-S)+(\delta+\eta)P_3\widetilde{S}}{\eta(1-P_3)}   \\
c_2&=\frac{P_3[\eta(\delta+\eta)+P_3(1-S)(\delta+\eta)(2\delta-\eta)]}{\eta(1-P_3)}
\end{align}
\end{Theorem}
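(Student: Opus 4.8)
The plan is to substitute the three explicit probability families from Theorem 8 --- $\pi_{(n,0,0)}$ in (16), $\pi_{(n,m,0)}$ in (17), and $\pi_{(n,m,l)}$ in (18) --- into the marginalization identity (15) and carry out the indicated single and double summations. This is the same route taken in the proof of Theorem 6, but the presence of the third age-component $l$ forces an extra layer of summation.

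First I would evaluate the single sum $\sum_{m=1}^{n-1}\pi_{(n,m,0)}$. Viewed as a function of $m$, each summand in (17) is a constant, a pure geometric term in $(1-P_3)^{m}$ or $(\delta/(\delta+\eta))^{m}$, or one of these multiplied by a linear factor $m$ or $(n-m-1)$. Consequently the sum is assembled from the elementary identities $\sum_{m=1}^{n-1}x^{m-1}=\frac{1-x^{n-1}}{1-x}$, its derivative form $\sum_{m=1}^{n-1}m\,x^{m-1}=\frac{1-nx^{n-1}+(n-1)x^{n}}{(1-x)^{2}}$, and the arithmetic sums $\sum_{m=1}^{n-1}1=n-1$ and $\sum_{m=1}^{n-1}(n-m-1)=\tfrac{(n-1)(n-2)}{2}$. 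After simplification every contribution collapses onto one of the bases $\delta^{n-1}$, $[(\delta+\eta)(1-P_3)]^{n-1}$, or $[\delta(1-P_3)]^{n-1}$, some of them carrying a factor $n$ or $n^{2}$.

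Next I would dispatch the double sum $\sum_{m=2}^{n-1}\sum_{l=1}^{m-1}\pi_{(n,m,l)}$ by summing over $l$ first. In (18) the index $l$ appears only through $(\delta/(\delta+\eta))^{l-1}$ or not at all, so the inner sum is a single geometric sum (or a multiplication by $m-1$ for the $l$-free summands). The resulting functions of $m$ are of exactly the same types as above, so the outer sum over $m$ is handled by the same two summation identities. I would then add $\pi_{(n,0,0)}$ from (16) and group all terms by their exponential base.

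The main obstacle is the final collection-and-cancellation stage. Both the single sum and the double sum individually produce terms proportional to $n^{2}[\delta(1-P_3)]^{n-1}$ (for instance, from summing the $(n-m-1)$-weighted pieces), yet the target expression (21) is only linear in $n$. Verifying that these quadratic contributions cancel, and that the surviving pieces reassemble exactly into the coefficients $c_1$ and $c_2$ of (22)--(23), hinges on substituting the explicit value of $S$ given in (20); without it the spurious $n^{2}$ terms and several competing $[\delta(1-P_3)]^{n-1}$ terms do not vanish. As a final consistency check I would set $n=1$, where both sums in (15) are empty, so that $\Pr\{\Delta_B^P=1\}$ reduces to $\pi_{(1,0,0)}=pP_1$, confirming that (21) holds for every $n\geq1$.
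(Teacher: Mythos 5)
Your proposed route is the same one the paper follows in Appendix E: substitute the probabilities (16)--(18) of Theorem 8 into the marginalization (15), evaluate the double sum by summing the geometric factor $(\delta/(\delta+\eta))^{l-1}$ over $l$ first (with a factor $m-1$ for the $l$-free pieces), handle the resulting sums over $m$ with the standard geometric and first-moment identities, collect terms by exponential base, and finally reconcile the small-$n$ cases. So the skeleton is right, and executing it does produce (21)--(23).

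However, two of your claims about the critical final stage are incorrect. First, the cancellation of the quadratic-in-$n$ terms does \emph{not} hinge on substituting the explicit value of $S$ from (20). All quadratic contributions share the common factor $\tfrac{1}{2}(1-S)\eta P_3\,[\delta(1-P_3)]^{n-2}$: the single sum over $m$ contributes this factor times $-(\delta P_3+\eta)n^2$ (the paper's (88)--(89)), while the double sum contributes it times $+\eta\, n^2$ and $+\delta P_3\, n^2$ (from the $(n-2)(n-3)$ and $(n-1)(n-2)$ pieces of (90)); since $-(\delta P_3+\eta)+\eta+\delta P_3=0$, the $n^2$ terms cancel identically for \emph{every} value of $S$. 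Indeed the target expression itself keeps $S$ and $\widetilde{S}$ symbolic --- they appear in $c_1$, $c_2$ and in the linear-in-$n$ terms --- so the computation must be carried through with $S$, $\widetilde{S}$ as symbols; substituting (20) early, as you prescribe, creates no needed cancellation and would only obscure the match with (21)--(23). Likewise the ``competing'' $[\delta(1-P_3)]^{n-1}$ terms are not supposed to vanish: they assemble into the coefficient $c_2$. Second, your consistency check at $n=1$ alone does not establish validity for all $n\geq1$: the derivation involving the double sum is only meaningful for $n\geq3$, so the case $n=2$ (empty double sum, but nonempty single sum $\pi_{(2,0,0)}+\pi_{(2,1,0)}$) must be verified separately; the paper does this by checking that its closed form for the double sum vanishes at $n=1,2$ and that for the single sum vanishes at $n=1$. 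Both points are repairable --- the computation in fact goes through more simply than you predict --- but as written your proposal misidentifies the mechanism of the key cancellation and leaves $n=2$ unverified.
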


\section{Performance metrics of Age of Information}
In Section III, for two cases where the packet retransmission from relay can and cannot be preempted by later packets, we derive the stationary AoI distribution of status updating system with a no-buffer relay. To investigate the influence of \emph{storing another packet} in relay on system's AoI, the distribution of steady state AoI is determined in Section IV.

Given the stationary AoI distributions, the following corollary shows the mean and the variance of AoI for all three cases, and the calculation details are provided in Appendix F.
\begin{Corollary}
For the relay-assisted status updating system, the average AoI for no-buffer-and-preemption case is equal to
\begin{align}
\mathbb{E}[\Delta_{NB}^P]&=\frac{(1-pP_1)P_3}{[(1-pP_1)P_3-p(1-P_1)P_2](1-\delta)} + \frac{\xi}{[1-(1-pP_1)(1-P_3)]^2}
\end{align}
while for the no-buffer-no-preemption case, it shows that
\begin{align}
\mathbb{E}[\Delta_{NB}^{NP}]&= \frac{\beta_1}{(1-\delta)^2}+\frac{\beta_2}{[1-(1-pP_1)(1-P_3)]^2}  \notag \\
& \quad  + \frac{p(1-P_1)P_2\beta_1[2-\delta-(1-pP_1)(1-P_3)]}{(1-\delta)^2[1-(1-pP_1)(1-P_3)]^2} +\frac{2p(1-P_1)P_2\beta_2}{[1-(1-pP_1)(1-P_3)]^3}
\end{align}

The average AoI of relay-assisted system when the relay is equipped with a buffer of size 1 is calculated as
\begin{align}
\mathbb{E}[\Delta_B^P]&=\frac{(\delta+\eta)P_3(pP_1+\eta)}{[\delta-(\delta+\eta)(1-P_3)](1-\delta)^2}-\frac{c_1}{[1-(\delta+\eta)(1-P_3)]^2} +\frac{c_2}{[1-\delta(1-P_3)]^2}  \notag \\
& \quad +\frac{(1-S)(\delta+\eta)P_3^2[1+\delta(1-P_3)]}{(1-P_3)[1-\delta(1-P_3)]^3} +\frac{P_3\widetilde{S}[1+(\delta+\eta)(1-P_3)]}{(1-P_3)[1-(\delta+\eta)(1-P_3)]^3}
\end{align}
where in all of three equations
\begin{equation}
\delta=1-p(P_1+P_2-P_1P_2)  \notag
\end{equation}

Besides, $\xi$ in (24) is given as
\begin{align}
\xi&=\frac{p[P_1+(1-P_1)P_2P_3]}{1-P_3} - \frac{P_3(1-\delta)\delta}{[(1-pP_1)P_3-p(1-P_1)P_2](1-P_3)}  \notag
\end{align}
and $\beta_1$ and $\beta_2$ in expression (25) are defined in (9) and (10).

In (26), $\eta=p(1-P_1)P_2$, $S$, $\widetilde{S}$ are given in (19) and (20). Other two coefficients $c_1$, $c_2$ are introduced in equations (22) and (23).
\end{Corollary}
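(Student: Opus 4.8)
The plan is to obtain $\mathbb{E}[\Delta]=\sum_{n\geq1}n\Pr\{\Delta=n\}$ (and, for the variance, $\mathbb{E}[\Delta^2]=\sum_{n\geq1}n^2\Pr\{\Delta=n\}$ together with $\mathrm{Var}(\Delta)=\mathbb{E}[\Delta^2]-(\mathbb{E}[\Delta])^2$) directly from the explicit stationary AoI distributions already proved in Theorems 3, 6, and 9. The essential observation is that in each of the three cases the distribution is a \emph{finite} linear combination of elementary sequences of the form $r^{n-1}$ and $n\,r^{n-1}$, whose common ratios $r$ range over $\delta$, $(1-pP_1)(1-P_3)$, $(\delta+\eta)(1-P_3)$, and $\delta(1-P_3)$. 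For admissible parameters $p,P_1,P_2,P_3\in(0,1)$ each such ratio lies strictly in $(0,1)$, so every series involved converges absolutely and may be summed termwise.

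The whole computation rests on three power-series identities that I would record first:
$$\sum_{n=1}^{\infty} r^{n-1} = \frac{1}{1-r}, \qquad \sum_{n=1}^{\infty} n\,r^{n-1} = \frac{1}{(1-r)^2}, \qquad \sum_{n=1}^{\infty} n^2\,r^{n-1} = \frac{1+r}{(1-r)^3}.$$
The first is geometric, and the second and third follow by differentiating it once and twice. For the no-preemption distribution of Theorem 6, which carries a term proportional to $(n-1)[(1-pP_1)(1-P_3)]^{n-2}$, the shifted identity $\sum_{n\geq2}(n-1)r^{n-2}=1/(1-r)^2$ is the relevant one.

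Applying these identities termwise then yields each formula. For $\Delta_{NB}^P$ (Theorem 3) the distribution combines $\delta^{n-1}$ and $[(1-pP_1)(1-P_3)]^{n-1}$, so the $n$-weighted sum produces exactly the two reciprocal-square denominators $(1-\delta)$ and $[1-(1-pP_1)(1-P_3)]^2$ appearing in (24); a direct check confirms the factor $(1-\delta)$ in the numerator of Theorem 3 cancels one power, matching (24). The no-preemption mean (25) adds the contribution of the $(n-1)r^{n-2}$ term, whose $n$-weighting brings in $\sum_n n(n-1)r^{n-2}$ and hence the cubic denominator $[1-(1-pP_1)(1-P_3)]^3$. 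In the buffered case the two terms of Theorem 9 proportional to $n\,[\delta(1-P_3)]^{n-1}$ and $n\,[(\delta+\eta)(1-P_3)]^{n-1}$ require the second-moment identity, which is precisely what creates the factors $1+\delta(1-P_3)$ and $1+(\delta+\eta)(1-P_3)$ and the cubic denominators in (26); matching coefficient by coefficient against the claimed expression then verifies it. The variances follow by the same scheme, now weighting by $n^2$ and invoking the $n^2$- and (after one shift) $n^3$-type sums.

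I expect no conceptual obstacle, since every step is a termwise application of the three standard series sums. The genuine difficulty is purely algebraic bookkeeping, concentrated in the buffered case: the coefficients $c_1$, $c_2$, $S$, and $\widetilde{S}$ of Theorem 9 are themselves rational functions of $\delta,\eta,P_1,P_3$, so carrying them through the $n$- and $n^2$-weighted sums and collapsing the result to the compact form (26) demands careful manipulation. I would organise the work by common ratio, grouping all contributions sharing the same $r$ before simplifying, so that the reciprocal powers of $(1-r)$ combine cleanly; using the relation $\delta+\eta=1-pP_1$ to unify ratios across terms should further streamline the final simplification.
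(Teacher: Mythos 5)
Your proposal is correct and follows essentially the same route as the paper's own proof: termwise summation of the explicit distributions from Theorems 3, 6, and 9 using the geometric-series identities $\sum_{n\geq1} n\,r^{n-1}=1/(1-r)^2$ and $\sum_{n\geq1} n^2 r^{n-1}=(1+r)/(1-r)^3$ (the paper packages these as $\sum_n n(n-1)\cdots(n-k+1)x^{n-k}=k!/(1-x)^{k+1}$ and splits $n^2=n(n-1)+n$). Your identification of where each denominator power arises, including the cubic terms from the $(n-1)r^{n-2}$ and $n\,r^{n-1}$ contributions, matches the paper's computation exactly.
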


For different situations, the second moments of AoI can also be obtained. Then, we can calculate the variance of stationary AoI by
\begin{equation}
\emph{Var}[\Delta]=\mathbb{E}[\Delta^2]-\left(\mathbb{E}[\Delta]\right)^2
\end{equation}
for all three cases.

The expression of $\emph{Var}[\Delta]$ is lengthy but the calculations are direct. Here, we provide only $\emph{Var}[\Delta_{NB}^{NP}]$, AoI variances for other two cases can be obtained similarly.

From AoI distribution (11), we have
\begin{align}
\mathbb{E}\left[\left(\Delta_{NB}^{NP}\right)^2\right] &= \left(\beta_1+\frac{p(1-P_1)P_2\beta_1}{\delta-(1-pP_1)(1-P_3)}\right)\sum\nolimits_{n=1}^{\infty}n^2\delta^{n-1}    \notag \\
& \quad + \left(\beta_2 - \frac{p(1-P_1)P_2\beta_1}{\delta-(1-pP_1)(1-P_3)}\right) \sum\nolimits_{n=1}^{\infty}n^2[(1-pP_1)(1-P_3)]^{n-1}    \notag \\
& \quad + p(1-P_1)P_2\beta_2\sum\nolimits_{n=1}^{\infty}n^2(n-1)[(1-pP_1)(1-P_3)]^{n-2}  \notag \\
&= \left(\beta_1+\frac{p(1-P_1)P_2\beta_1}{\delta-(1-pP_1)(1-P_3)}\right)\frac{1+\delta}{(1-\delta)^3} +\left(\beta_2 - \frac{p(1-P_1)P_2\beta_1}{\delta-(1-pP_1)(1-P_3)}\right)   \notag \\
& \quad  \times \frac{1+(1-pP_1)(1-P_3)}{[1-(1-pP_1)(1-P_3)]^3} + p(1-P_1)P_2\beta_2 \frac{4+2(1-pP_1)(1-P_3)}{[1-(1-pP_1)(1-P_3)]^4}
\end{align}
where the last sum
\begin{align}
&\sum\nolimits_{n=1}^{\infty}n^2(n-1)[(1-pP_1)(1-P_3)]^{n-2}   \notag \\
={}& \sum\nolimits_{n=1}^{\infty}(n+1)n(n-1)[(1-pP_1)(1-P_3)]^{n-2} -\sum\nolimits_{n=1}^{\infty}n(n-1)[(1-pP_1)(1-P_3)]^{n-2}   \notag \\
={}&\frac{6}{[1-(1-pP_1)(1-P_3)]^4} - \frac{2}{[1-(1-pP_1)(1-P_3)]^3} \frac{4+2(1-pP_1)(1-P_3)}{[1-(1-pP_1)(1-P_3)]^4}  \notag
\end{align}
can be derived using following formula
\begin{equation}
\sum\nolimits_{n=1}^{\infty}n(n-1)\cdots(n-k+1)x^{n-k}=\frac{k!}{(1-x)^{k+1}} \text{  for }0<x<1, \text{ and any }k\geq1.    \notag
\end{equation}

Combining results (25) and (28), the AoI variance $\emph{Var}\left[\Delta_{NB}^{NP}\right]$ is obtained by formula (27).

\section{Numerical Simulations}
The numerical results of AoI are provided in this Section. Specifically, we will depict the stationary distribution of three AoIs, i.e., $\Delta_{NB}^P$, $\Delta_{NB}^{NP}$ and $\Delta_B^P$ for relay-assisted status updating system. For three situations, we also draw the curves of average AoI and the AoI's variance. To demonstrate the improvement of system's timeliness by adding an intermediate relay, the
AoI performance of a non-relay updating system is calculated and comparisons are carried out as well.

First of all, the age of information of a system without relay is analyzed briefly. When there is no relay in system, the packet can only be transmitted to destination via the direct link $s-d$. Since the packet on channel $s-d$ is assumed to be erased independently at each time slot, and the propagation delay is 1, it is easy to describe the random dynamics of the AoI at destination $d$. Let the current AoI is $n$, it is observed that either the value of AoI is reduced to 1 when a packet is obtained successfully on $s-d$ link, which occurs with probability $pP_1$, or the AoI increases 1. These random state transfers can be characterized by a one-dimensional stochastic process whose discrete state is simply defined as the AoI at $d$.

When the process reaches the steady state, it is not hard to obtain that the AoI of non-relay system, $\Delta$, is distributed as
\begin{equation}
\Pr\{\Delta=n\}=pP_1(1-pP_1)^{n-1}  \qquad  (n\geq1)
\end{equation}
which is a geometric distribution with parameter $pP_1$.

The mean and variance of non-relay AoI are calculated as
\begin{equation}
\mathbb{E}[\Delta]=\frac{1}{pP_1}, \qquad  \quad
\emph{Var}[\Delta]=\frac{1-pP_1}{(pP_1)^2}
\end{equation}

\begin{figure}[!t]
\centering
\includegraphics[width=2.4in]{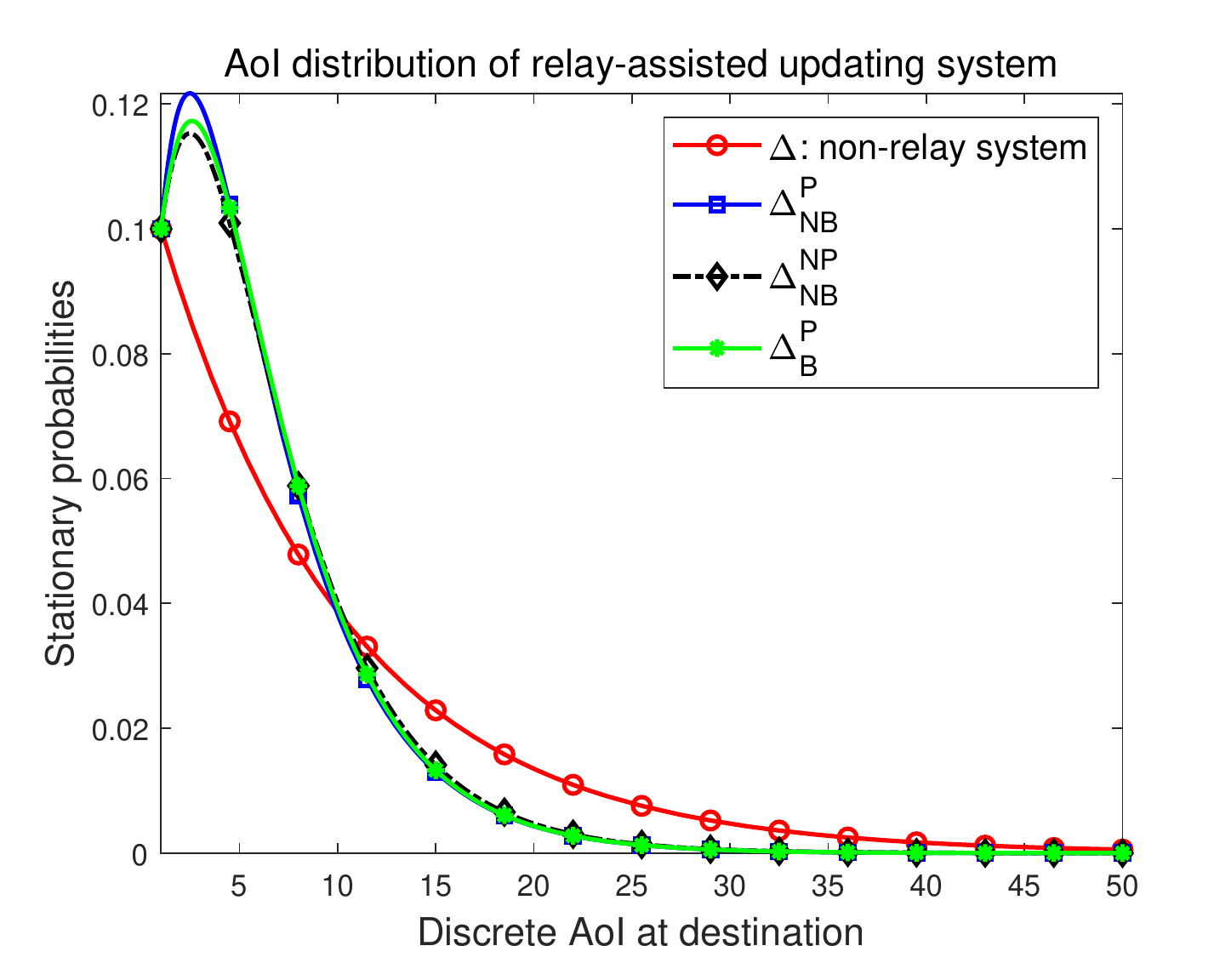}
\caption{Stationary AoI distributions of relay-assisted status updating system.}
\label{fig_dis}
\end{figure}

We depict the AoI's distribution in Figure \ref{fig_dis}, including three relay-assisted cases and the AoI distribution of a non-relay system. The packet generation probability is set as $p=2/5$. The transmission success probabilities of three links are $P_1=1/4$, $P_2=1/3$, and $P_3=1/3$.

It was seen that when the relay is added, all of three distribution curves have a peak, and decrease rapidly as the value of AoI becomes larger. While the AoI distribution of the non-relay system is decreasing in all the AoI range, whose tail drops much slower. Therefore, compared with the non-relay system, it is less likely that the AoI of a relay-assisted status updating system takes large values. The difference of three relay-assisted AoI distributions are tiny. It is seen that three distribution curves take the peak value under approximately the same AoI, and almost coincide when the AoI get slightly large, for intance, when its value is greater than 5.

\begin{figure*}[!t]
\centering
\subfloat{\includegraphics[width=2.2in]{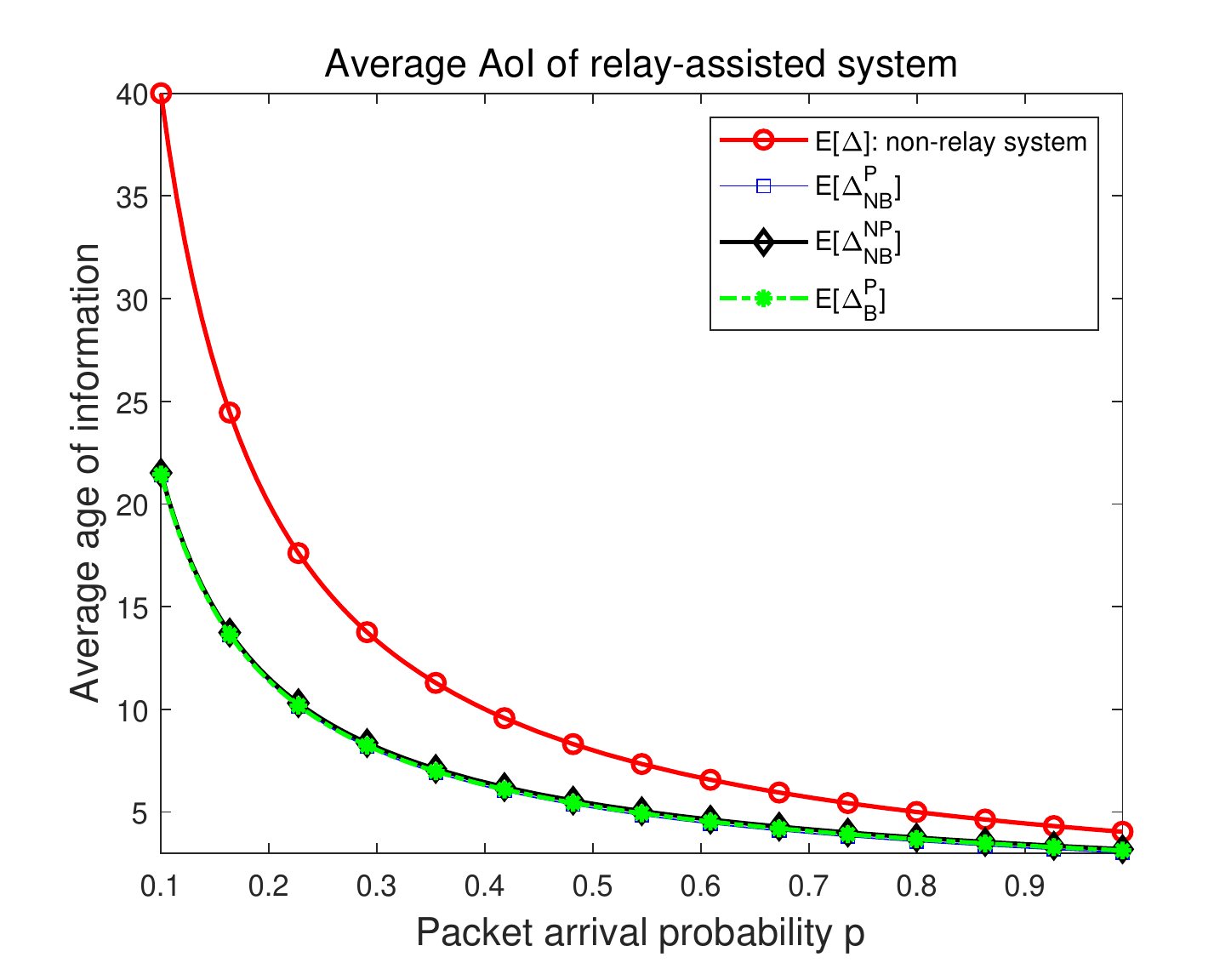}%
\label{fig_ave_1}}
\hfil
\subfloat{\includegraphics[width=2.2in]{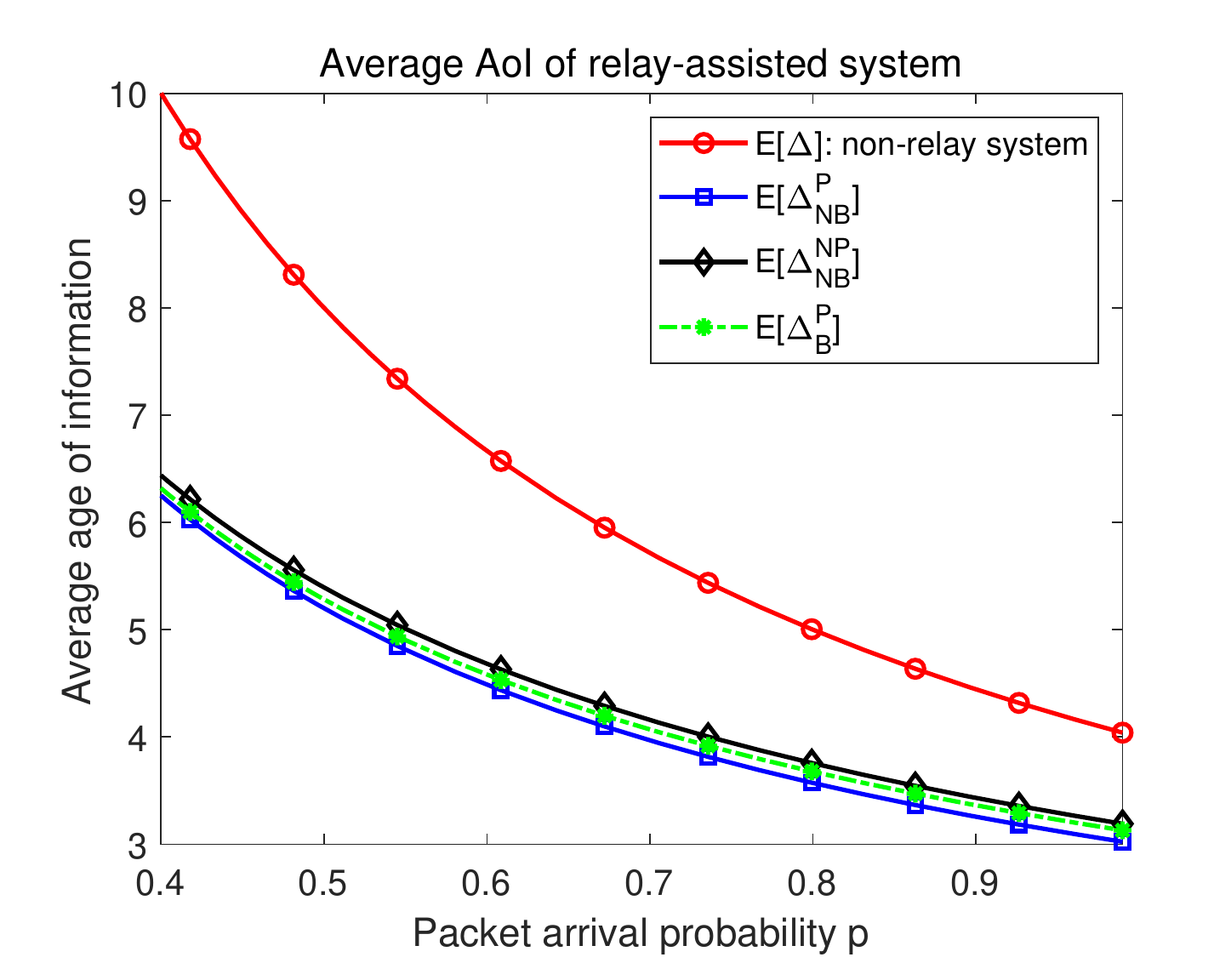}%
\label{fig_ave_2}}
\hfil
\caption{Average AoI of updating system with and without relay.}
\label{fig_ave}
\end{figure*}

To compare the timeliness performance of various updating systems, we depict the graphs of average AoI in Figure \ref{fig_ave}. The relationship among three average AoI of relay-assisted situations are clear in Figure \ref{fig_ave_2}, in which we offer the numerical results of all the cases when the packet generation probability is high. Firstly, it is apparent that introducing a relay can reduce the average AoI at the destination. The average AoI of relay-assisted cases are smaller than that of a non-relay system. As packet generation probability $p$ gets larger, the average performance of AoI are all decreasing, no matter there has a relay in system or not. At the same time, the average AoI's gap between relay-assisted cases and non-relay case is getting small.

Numerical results in Figure \ref{fig_ave_2} shows that the average AoI $\mathbb{E}\left[\Delta_{NB}^P\right]$ is minimal among three relay-assisted situations. It is understandable that average age of information $\mathbb{E}\left[\Delta_{NB}^P\right]$ is less than $\mathbb{E}\left[\Delta_{NB}^{NP}\right]$, since replacing an “old” packet with a fresh one does not increase the AoI at the destination. However, observing that $\mathbb{E}\left[\Delta_{B}^P\right]$ is above $\mathbb{E}\left[\Delta_{NB}^P\right]$, which implies that under the system settings in this paper, storing another packet in relay does not help to reduce the system's AoI. Equivalently, from the sense of average performance, adding buffer in relay cannot enhance the timeliness of a relay-assisted status updating system. Mathematically, we have
\begin{equation}
   \mathbb{E}\left[\Delta_{NB}^P\right]
< \mathbb{E}\left[\Delta_B^P\right]
< \mathbb{E}\left[\Delta_{NB}^{NP}\right]
< \mathbb{E}\left[\Delta\right]
\end{equation}

\begin{figure*}[!t]
\centering
\subfloat{\includegraphics[width=2.2in]{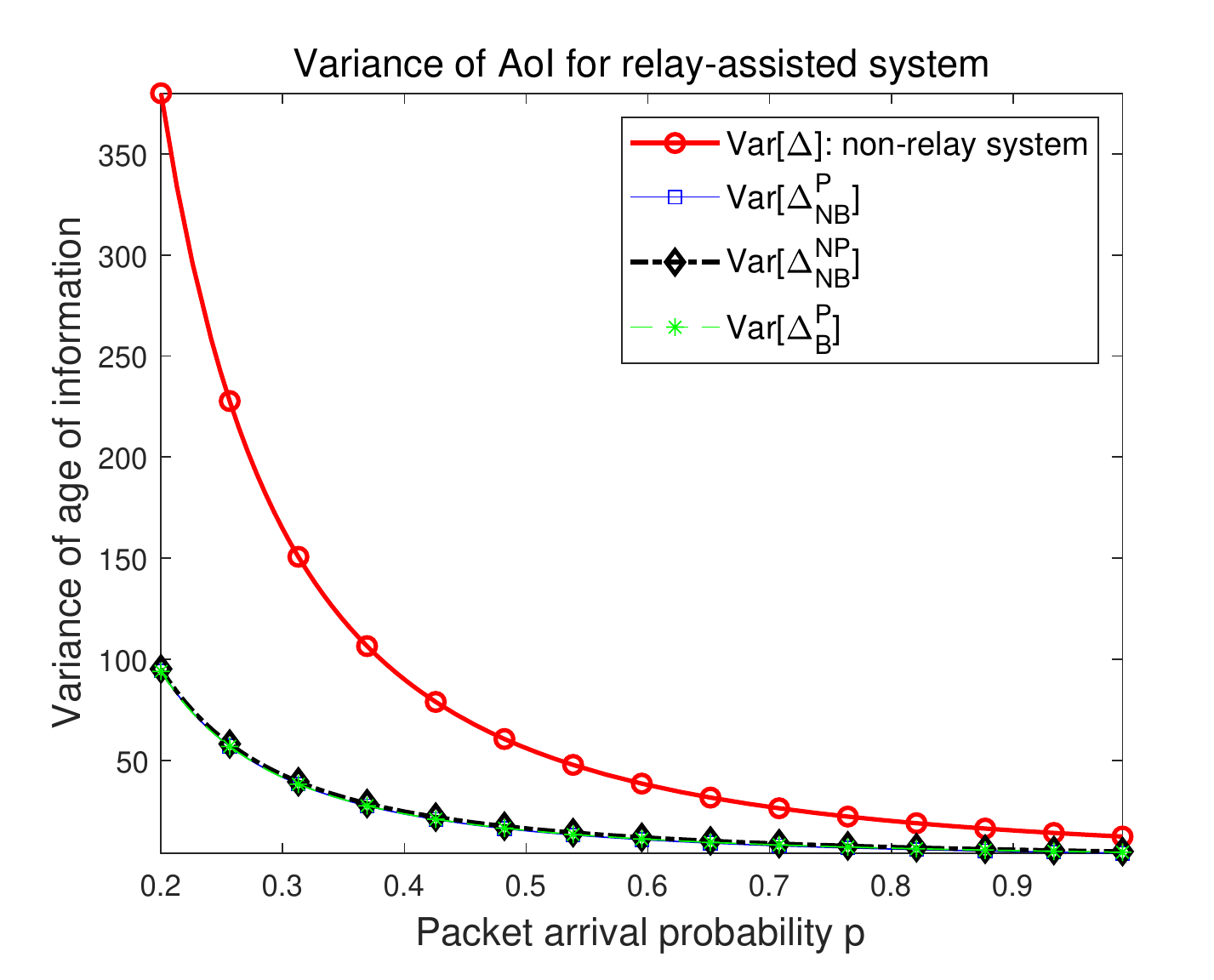}%
\label{fig_var_1}}
\hfil
\subfloat{\includegraphics[width=2.2in]{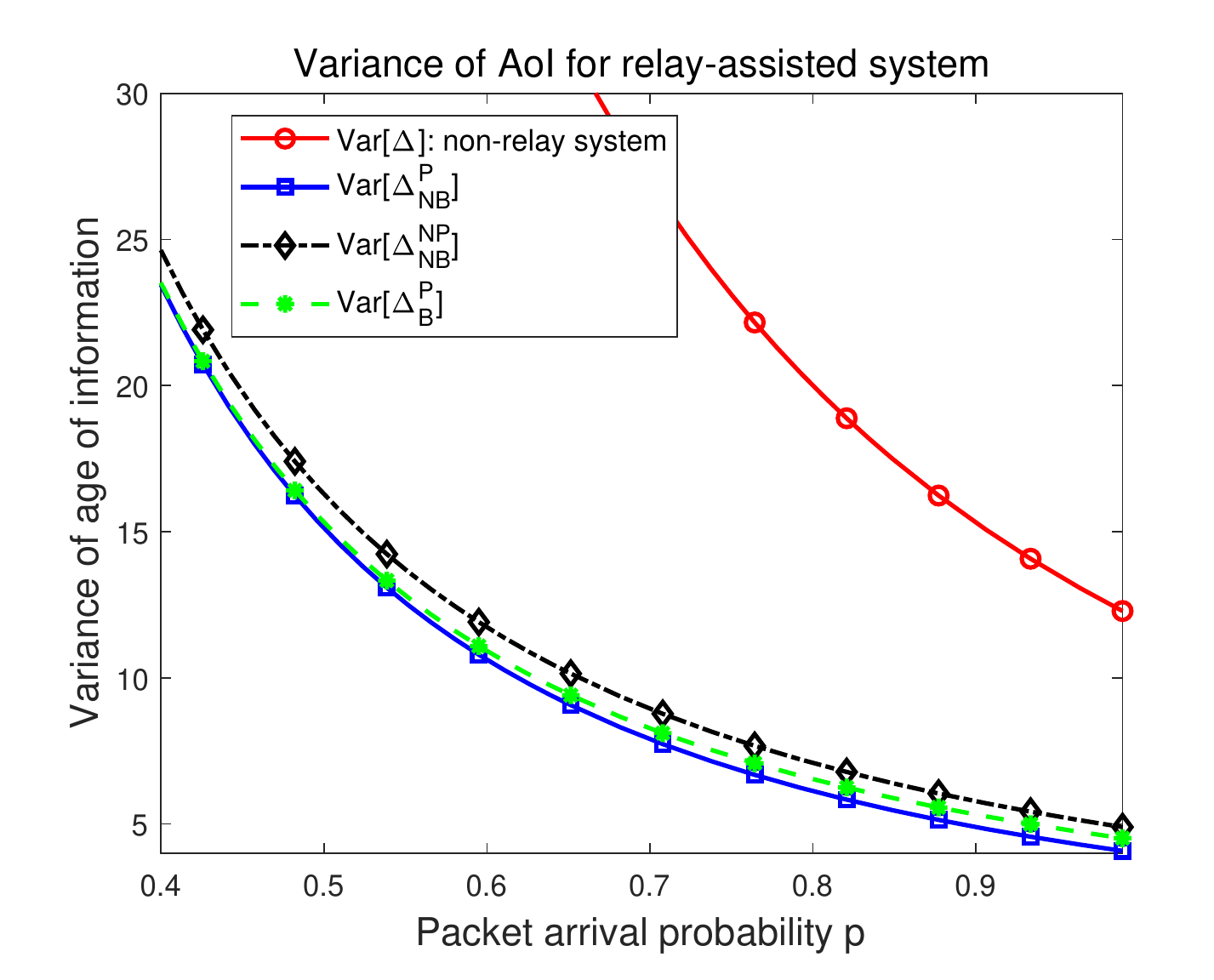}%
\label{fig_var_2}}
\hfil
\caption{Variance of AoI for the status updating system with and without relay.}
\label{fig_var}
\end{figure*}

In Figure \ref{fig_var}, we also provide the variance of stationary AoI for each case. Simulation results show that
\begin{equation}
   \emph{Var}\left[\Delta_{NB}^P\right]
< \emph{Var}\left[\Delta_B^P\right]
< \emph{Var}\left[\Delta_{NB}^{NP}\right]
< \emph{Var}\left[\Delta\right]
\end{equation}

Notice that the relationships between various AoI variances are the same as that of average AoIs. The age of information $\Delta_{NB}^P$ also has the minimal variance. The comparisons between three AoI variances for relay-assisted situations are illustrated in Figure \ref{fig_var_2}. When the packet generation probability is large, the relationships among them is clearer.

\section{Conclusion}
In this paper, we considere a relay-assisted status updating system and analyze the performace of age of information at the destination. We characterize the stationary AoI for three different settings at relay, i.e., no-buffer-and-preemption, no-buffer-no-preemption, and buffer-and-preemption. The obtained results show that introducing an intermediate relay can reduce the average AoI and AoI's variance dramatically, thus proving the improvement of system's timeliness by adding the relay. On the other hand, we show that the no-buffer-and-preemption setting of relay achieve the minimal average performance and variance of the AoI. As a result, under the system model considered in this paper, there is no necessity to add the buffer in relay because this does not help to enhance the system's timeliness.

We attribute the idea that invoking a multi-dimensional state vector and constituting multi-dimensional AoI stochastic process to SHS method, which is a systematic approach to calculate the average AoI of status updating systems in continuous time model. The methods used in the paper are the standard tools from queueing theory. Notice that in a status updating system, it is not easy to describe the AoI itself, but indeed is easier to characterize the dynamics of a “bigger” state vector which contains the AoI as the part. Therefore, as long as the steady state of the larger-dimensional process is solved, as one of marginal distributions, the stationary distribution of AoI is obtained as well. Once all the stationary probabilities of larger-dimensional process are obtained, apart from AoI, other marginal distributions are all be determined. In this paper, for example, we can calculate the age distribution of packet in relay by summing over AoI-component in state vector.


\bibliographystyle{IEEEtranTCOM}
\bibliography{paperlist}

\appendices


\section{Proof of Theorem 2}
In this Appendix, we solve the following system of equations and find all the probabilities $\pi_{(n,m)}$, $n>m\geq0$.
\begin{equation}
\begin{cases}
\pi_{(n,m)}=\pi_{(n-1,m-1)}\delta_1                & \qquad   (n>m\geq2)  \\
\pi_{(n,1)}=\pi_{(n-1,0)}p(1-P_1)P_2 + \sum\nolimits_{j=1}^{n-2}\pi_{(n-1,j)}p(1-P_1)P_2(1-P_3)   \\
\qquad \qquad + \sum\nolimits_{k=n}^{\infty}\pi_{(k,n-1)}p(1-P_1)P_2P_3      & \qquad    (n\geq3)  \\
\pi_{(2,1)}=\pi_{(1,0)}p(1-P_1)P_2 + \sum\nolimits_{k=2}^{\infty}\pi_{(k,1)}p(1-P_1)P_2P_3   \\
\pi_{(n,0)}=\pi_{(n-1,0)}\delta_2 + \sum\nolimits_{k=n}^{\infty}\pi_{(k,n-1)}\delta_3    & \qquad   (n\geq2)  \\
\pi_{(1,0)}=\left(\sum\nolimits_{n=1}^{\infty}\pi_{(n,0)} + \sum\nolimits_{m=1}^{\infty}\sum\nolimits_{n=m+1}^{\infty}\pi_{(n,m)}   \right) pP_1
\end{cases}
\end{equation}
where we denote
\begin{align}
\delta_2&=(1-p) + p(1-P_1)(1-P_2) = 1-p(P_1+P_2-P_1P_2)    \notag
\end{align}
and
\begin{align}
\delta_1&=(1-p)(1-P_3)+p(1-P_1)(1-P_2)(1-P_3) =(1-P_3)\delta_2,  \notag \\
\delta_3&=(1-p)P_3+p(1-P_1)(1-P_2)P_3=P_3\delta_2  \notag
\end{align}

First of all, from the first line of (33) we have
\begin{align}
\pi_{(n,m)}&= \pi_{(n-1,m-1)}\delta_1=\dots=\pi_{(n-m+1,1)}\delta_1^{m-1}   \notag \\
&= \begin{cases}
\left[\pi_{(1,0)}p(1-P_1)P_2 \sum\nolimits_{k=2}^{\infty}\pi_{(k,1)}p(1-P_1)P_2P_3 \right]\delta_1^{m-1}   &  (n-m=1)  \\
\Big[\pi_{(n-m,0)}p(1-P_1)P_2  \sum\nolimits_{j=1}^{n-m-1}\pi_{(n-m,j)}p(1-P_1)P_2(1-P_3)  \\
  \qquad \quad + \sum\nolimits_{k=n-m+1}^{\infty}\pi_{(k,n-m)}p(1-P_1)P_2P_3 \Big]\delta_1^{m-1}  &    (n-m\geq2)
\end{cases}
\end{align}

In above expressions, we have substituted the second and third lines of (33).

For cases $n\geq2$, the fourth equation of (33) says that
\begin{align}
\pi_{(n,0)}=\pi_{(n-1,0)}\delta_2 + \left(\sum\nolimits_{k=n}^{\infty}\pi_{(k,n-1)}\right) \delta_3
\end{align}

Using probability expressions obtained in (34), we deal with the sum in equation (35) as follows.
\begin{align}
&\sum\nolimits_{k=n}^{\infty}\pi_{(k,n-1)}= \pi_{(n,n-1)}+\sum\nolimits_{k=n+1}^{\infty}\pi_{(k,n-1)} \notag \\
={}&\left[\pi_{(1,0)}p(1-P_1)P_2 + \sum\nolimits_{k=2}^{\infty}\pi_{(k,1)}p(1-P_1)P_2P_3 \right]\delta_1^{n-2}  \notag \\
  {}& \quad + \sum\nolimits_{k=n+1}^{\infty} \bigg\{ \pi_{(k-n+1,0)}p(1-P_1)P_2 + \sum\nolimits_{j=1}^{k-n}\pi_{(k-n+1,j)}p(1-P_1)P_2(1-P_3)   \notag \\
  {}& \quad + \sum\nolimits_{y=k-n+2}^{\infty}\pi_{(y,k-n+1)}p(1-P_1)P_2P_3 \bigg\}\delta_1^{n-2}  \notag  \\
={}&\left[\pi_{(1,0)}p(1-P_1)P_2 + \left(\sum\nolimits_{k=2}^{\infty}\pi_{(k,1)}\right) p(1-P_1)P_2P_3 \right]\delta_1^{n-2} \notag \\
  {}& \quad + \bigg[ \left(\sum\nolimits_{n=2}^{\infty}\pi_{(n,0)}\right) p(1-P_1)P_2 + \left( \sum\nolimits_{k=n+1}^{\infty} \sum\nolimits_{j=1}^{k-n}\pi_{(k-n+1,j)}\right) p(1-P_1)P_2(1-P_3)  \notag \\
  {}& \quad + \left(\sum\nolimits_{k=n+1}^{\infty}\sum\nolimits_{y=k-n+2}^{\infty}\pi_{(y,k-n+1)}\right) p(1-P_1)P_2P_3 \bigg]\delta_1^{n-2}  \notag \\
={}&\left[\pi_{(1,0)}p(1-P_1)P_2 + \left(\sum\nolimits_{n=2}^{\infty}\pi_{(n,1)}\right) p(1-P_1)P_2P_3 \right]\delta_1^{n-2}  \notag \\
  {}& \quad + \Big[ \left(\sum\nolimits_{n=2}^{\infty}\pi_{(n,0)}\right) p(1-P_1)P_2 + \left( \sum\nolimits_{n=2}^{\infty} \sum\nolimits_{m=1}^{n-1}\pi_{(n,m)}\right) p(1-P_1)P_2(1-P_3)  \notag \\
  {}& \quad + \left(\sum\nolimits_{m=2}^{\infty}\sum\nolimits_{n=m+1}^{\infty}\pi_{(n,m)}\right) p(1-P_1)P_2P_3 \Big]\delta_1^{n-2} \\
={}& \left[\left(\sum\nolimits_{n=1}^{\infty}\pi_{(n,0)}\right)p(1-P_1)P_2  + \left( \sum\nolimits_{m=1}^{\infty} \sum\nolimits_{n=m+1}^{\infty}\pi_{(n,m)}\right) p(1-P_1)P_2 \right]\delta_1^{n-2}   \\
={}& p(1-P_1)P_2\delta_1^{n-2}
\end{align}

Let $k-n+1=\tilde n$, $j=\tilde m$, then
\begin{equation}
\sum\nolimits_{k=n+1}^{\infty} \sum\nolimits_{j=1}^{k-n}\pi_{(k-n+1,j)}
=\sum\nolimits_{\tilde n=2}^{\infty}\sum\nolimits_{\tilde m=1}^{\tilde n-1}\pi_{(\tilde n,\tilde m)}  \notag
\end{equation}

Similarly, do the substitutions $y=\tilde n$ and $k-n+1=\tilde m$, it shows that
\begin{equation}
\sum\nolimits_{k=n+1}^{\infty}\sum\nolimits_{y=k-n+2}^{\infty}\pi_{(y,k-n+1)}
=\sum\nolimits_{\tilde m=2}^{\infty}\sum\nolimits_{\tilde n=\tilde m+1}^{\infty}\pi_{(\tilde n,\tilde m)}  \notag
\end{equation}

Substituting these sums, we obtain the equation (36). In addition, observing that
\begin{align}
&\sum\nolimits_{n=2}^{\infty}\pi_{(n,1)} +  \sum\nolimits_{m=2}^{\infty} \sum\nolimits_{n=m+1}^{\infty}\pi_{(n,m)} = \sum\nolimits_{m=1}^{\infty} \sum\nolimits_{n=m+1}^{\infty}\pi_{(n,m)}  \notag
\end{align}
and
\begin{equation}
\sum\nolimits_{m=1}^{\infty} \sum\nolimits_{n=m+1}^{\infty}\pi_{(n,m)}
=\sum\nolimits_{n=2}^{\infty} \sum\nolimits_{m=1}^{n-1}\pi_{(n,m)}   \notag
\end{equation}
which yields the result (37). Equation (38) holds because the probabilities contained in bracket of (37) add up to 1.

Finally, we derive the following recursive equation
\begin{equation}
\pi_{(n,0)}=\pi_{(n-1,0)}\delta_2  + p(1-P_1)P_2\delta_3\delta_1^{n-2}  \qquad  (n\geq2)
\end{equation}

Applying (39) iteratively, the general formula of $\pi_{(n,0)}$ can be obtained. We show that
\begin{align}
\pi_{(n,0)}&=\pi_{(1,0)}\delta_2^{n-1}+p(1-P_1)P_2\delta_3 \sum\nolimits_{j=0}^{n-2}\delta_2^j\delta_1^{n-2-j}  \notag \\
&=\pi_{(1,0)}\delta_2^{n-1}+p(1-P_1)P_2\delta_3 \frac{\delta_1^{n-1}-\delta_2^{n-1}}{\delta_1-\delta_2}
\end{align}

The first probability $\pi_{(1,0)}$ is equal to $pP_1$, which can be determined directly from the last equation of (33). Remember that $\delta_2=1-p(P_1+P_2-P_1P_2)$ and
\begin{equation}
\delta_1=(1-P_3)\delta_2, \quad  \delta_3=P_3\delta_2  \notag
\end{equation}
we have
\begin{align}
\pi_{(n,0)}&=pP_1\delta_2^{n-1}+p(1-P_1)P_2\delta_3\frac{\delta_1^{n-1}-\delta_2^{n-1}}{(1-P_3)\delta_2-\delta_2} \notag \\
&=pP_1\delta_2^{n-1}+p(1-P_1)P_2\left(\delta_2^{n-1}-\delta_1^{n-1} \right)  \notag \\
&=\left[pP_1+p(1-P_1)P_2\right]\delta_2^{n-1} - p(1-P_1)P_2\delta_1^{n-1}  \notag \\
&=(1-\delta_2)\delta_2^{n-1} - p(1-P_1)P_2[(1-P_3)\delta_2]^{n-1}
\end{align}

Let $n=1$, equation (41) gives $\pi_{(1,0)}=pP_1$. Therefore, we prove that expression (41) is valid for all $n\geq1$.

Next, probabilities $\pi_{(n,m)}$, $n>m\geq1$ are solved. We first calculate $\pi_{(n,1)}$, then by the relation
\begin{equation}
\pi_{(n,m)}=\pi_{(n-m+1,1)}\delta_1^{m-1}
\end{equation}
all the probabilities $\pi_{(n,m)}$ are obtained.

When $n\geq3$, from the second line of equations (33), we show that
\begin{align}
\pi_{(n,1)}&= \pi_{(n-1,0)}p(1-P_1)P_2 + \sum\nolimits_{j=1}^{n-2}\pi_{(n-1,j)}p(1-P_1)P_2(1-P_3) + \sum\nolimits_{k=n}^{\infty}\pi_{(k,n-1)}p(1-P_1)P_2P_3  \notag \\
&=\left\{(1-\delta_2)\delta_2^{n-2} - p(1-P_1)P_2\delta_1^{n-2}\right\}p(1-P_1)P_2  \notag \\
& \qquad  +  \sum\nolimits_{j=1}^{n-2}\pi_{(n-j,1)}\delta_1^{j-1} p(1-P_1)P_2(1-P_3) + p(1-P_1)P_2\delta_1^{n-2}p(1-P_1)P_2P_3 \\
&= p(1-P_1)P_2(1-\delta_2)\delta_2^{n-2} - [p(1-P_1)P_2]^2(1-P_3)\delta_1^{n-2}  \notag \\
& \qquad  + p(1-P_1)P_2(1-P_3)\sum\nolimits_{j=1}^{n-2}\pi_{(n-j,1)}\delta_1^{j-1}
\end{align}

In (43), we have substituted equations (41) and (38). Using relation (42), we can rewrite $\pi_{(n-1,j)}$ as $\pi_{(n-j,1)}\delta_1^{j-1}$.

Computing the difference
\begin{align}
\pi_{(n,1)}-\pi_{(n-1,1)}\delta_1 = p(1-P_1)P_2(1-\delta_2)\delta_2^{n-3}(\delta_2-\delta_1) + p(1-P_1)P_2(1-P_3)\pi_{(n-1,1)} \notag
\end{align}
which is equivalent to
\begin{align}
\pi_{(n,1)}&=\pi_{(n-1,1)}[\delta_1+p(1-P_1)P_2(1-P_3)]  + p(1-P_1)P_2(1-\delta_2)P_3\delta_2^{n-2} \notag \\
&= \pi_{(n-1,1)}(1-P_3)(1-pP_1) + p(1-P_1)P_2(1-\delta_2)P_3\delta_2^{n-2}
\end{align}
where
\begin{align}
\delta_1 + p(1-P_1)P_2(1-P_3) &= (1-P_3)\delta_2 + p(1-P_1)P_2(1-P_3)  \notag \\
&=(1-P_3)\left\{1-p(P_1+P_2-P_1P_2) + p(1-P_1)P_2 \right\}  \notag \\
&=(1-P_3)(1-pP_1)     \notag
\end{align}

Equation (45) gives a recursive relation of the probabilities $\pi_{(n,1)}$, $n\geq3$, from which we can derive that
\begin{align}
\pi_{(n,1)}&= \pi_{(2,1)}[(1-P_3)(1-pP_1)]^{n-2} + p(1-P_1)P_2(1-\delta_2)P_3 \sum\nolimits_{j=0}^{n-3}[(1-P_3)(1-pP_1)]^j \delta_2^{n-2-j}   \notag \\
&=\pi_{(2,1)}[(1-P_3)(1-pP_1)]^{n-2} + p(1-P_1)P_2P_3(1-\delta_2)\delta_2  \frac{\delta_2^{n-2}-[(1-P_3)(1-pP_1)]^{n-2}}{\delta_2-(1-P_3)(1-pP_1)}   \\
&=\frac{p(1-P_1)P_2P_3(1-\delta_2)\delta_2}{\delta_2-(1-P_3)(1-pP_1)}\delta_2^{n-2}+ \left(\pi_{(2,1)} - \frac{p(1-P_1)P_2P_3(1-\delta_2)\delta_2}{\delta_2-(1-P_3)(1-pP_1)} \right) [(1-P_3)(1-pP_1)]^{n-2}
\end{align}

Using the basic relation in (33), the probability $\pi_{(2,1)}$ is calculated as
\begin{align}
\pi_{(2,1)}&=\pi_{(1,0)}p(1-P_1)P_2 + \sum\nolimits_{k=2}^{\infty}\pi_{(k,1)}p(1-P_1)P_2P_3  \notag \\
&=pP_1p(1-P_1)P_2 + [p(1-P_1)P_2]^2P_3    \notag \\
&=p^2(1-P_1)P_2[P_1+(1-P_1)P_2P_3]  \notag
\end{align}

In equation (46), it is easy to see that for the case $n=2$, expression (47) reduces to $\pi_{(2,1)}=\pi_{(2,1)}$. Therefore, we show that (47) is actually valid for all $n\geq2$.

Now, for $n>m\geq1$, the probabilities $\pi_{(n,m)}$ can be obtained eventually. It shows that
\begin{align}
\pi_{(n,m)}&=\pi_{(n-m+1,1)}\delta_1^{m-1} \notag \\
&= \bigg\{ \frac{p(1-P_1)P_2P_3(1-\delta_2)\delta_2}{\delta_2-(1-P_3)(1-pP_1)}\delta_2^{n-m-1} \notag \\
& \quad + \left(\pi_{(2,1)} - \frac{p(1-P_1)P_2P_3(1-\delta_2)\delta_2}{\delta_2-(1-P_3)(1-pP_1)} \right) [(1-P_3)(1-pP_1)]^{n-m-1} \bigg\}\delta_1^{m-1}  \notag \\
&= \frac{p(1-P_1)P_2P_3(1-\delta_2)\delta_2}{\delta_2-(1-P_3)(1-pP_1)}\delta_2^{n-2}(1-P_3)^{m-1}   \notag \\
& \quad + \left(\pi_{(2,1)} - \frac{p(1-P_1)P_2P_3(1-\delta_2)\delta_2}{\delta_2-(1-P_3)(1-pP_1)} \right) [(1-P_3)(1-pP_1)]^{n-2}\left(\frac{\delta_2}{1-pP_1}\right)^{m-1}
\end{align}

So far, we have obtained all the stationary probabilities of process $AoI_{NB}^P$ and complete the proof of Theorem 2.

\section{Proof of Theorem 2}
In this part, according to equation (4), we calculation the stationary distribution of AoI.

For $n\geq2$, we first compute
\begin{align}
&\sum\nolimits_{m=1}^{n-1}\pi_{(n,m)} \notag \\
={}&\sum\nolimits_{m=1}^{n-1}\Bigg\{\pi_{(2,1)}[(1-pP_1)(1-P_3)]^{n-2}\left(\frac{\delta}{1-pP_1}\right)^{m-1} + \frac{p(1-P_1)P_2P_3(1-\delta)\delta}{\delta-(1-pP_1)(1-P_3)} \notag \\
  {}& \quad  \times \left[ \delta^{n-2}(1-P_3)^{m-1} - [(1-pP_1)(1-P_3)]^{n-2}\left(\frac{\delta}{1-pP_1} \right)^{m-1} \right] \Bigg\}  \notag \\
={}&\pi_{(2,1)}\frac{[(1-pP_1)(1-P_3)]^{n-1}-[(1-P_3)\delta]^{n-1}}{(1-pP_1-\delta)(1-P_3)} + \frac{p(1-P_1)P_2P_3(1-\delta)\delta}{\delta-(1-pP_1)(1-P_3)} \notag \\
  {}& \quad  \times \left\{ \frac{\delta^{n-2}-(1-P_3)[(1-P_3)\delta]^{n-2}}{P_3} - \frac{[(1-pP_1)(1-P_3)]^{n-1}-[(1-P_3)\delta]^{n-1}}{(1-pP_1-\delta)(1-P_3)}  \right\}  \notag \\
={}& \frac{p(1-P_1)P_2(1-\delta)}{\delta-(1-pP_1)(1-P_3)}\left\{\delta^{n-1}-[(1-P_3)\delta]^{n-1}\right\}  \notag \\
  {}& \quad  + \left(\pi_{(2,1)} - \frac{p(1-P_1)P_2P_3(1-\delta)\delta}{\delta-(1-pP_1)(1-P_3)} \right) \frac{[(1-pP_1)(1-P_3)]^{n-1}-[(1-P_3)\delta]^{n-1}}{(1-pP_1-\delta)(1-P_3)}   \\
={}&\eta_1\left\{\delta^{n-1}-[(1-P_3)\delta]^{n-1}\right\} + \eta_2\left\{[(1-pP_1)(1-P_3)]^{n-1}-[(1-P_3)\delta]^{n-1}\right\} \notag
\end{align}
where we denote
\begin{align}
\eta_1&=\frac{p(1-P_1)P_2(1-\delta)}{\delta-(1-pP_1)(1-P_3)},  \notag \\
\eta_2&=\left(\pi_{(2,1)}-\frac{p(1-P_1)P_2P_3(1-\delta)\delta}{\delta-(1-pP_1)(1-P_3)}\right) \frac{1}{(1-pP_1-\delta)(1-P_3)}  \notag
\end{align}

Therefore, when $n\geq2$, the probability that AoI equals $n$ is calculated as
\begin{align}
\Pr\{\Delta=n\}&=\pi_{(n,0)}+\sum\nolimits_{m=1}^{n-1}\pi_{(n,m)}  \notag  \\
&= [(1-\delta)+\eta_1]\delta^{n-1}+\eta_2[(1-pP_1)(1-P_3)]^{n-1}  \notag \\
&\qquad \quad -\left(\eta_1+\eta_2+p(1-P_1)P_2\right)[(1-P_3)\delta]^{n-1}
\end{align}

To obtain equation (50), we have substituted the probability expression (2).

Since
\begin{align}
\delta-(1-pP_1)(1-P_3) &=1-p(P_1+P_2-P_1P_2)-(1-pP_1)(1-P_3)  \notag \\
&=1-pP_1-p(1-P_1)P_2-(1-pP_1)(1-P_3)  \notag \\
&=(1-pP_1)P_3-p(1-P_1)P_2   \notag
\end{align}
we have the first coefficient
\begin{align}
(1-\delta)+\eta_1&=1-\delta + \frac{p(1-P_1)P_2(1-\delta)}{\delta-(1-pP_1)(1-P_3)} \notag \\
&=(1-\delta) \left( 1 + \frac{p(1-P_1)P_2}{(1-pP_1)P_3-p(1-P_1)P_2}\right) \notag \\
&=\frac{(1-pP_1)P_3(1-\delta)}{(1-pP_1)P_3-p(1-P_1)P_2}  \notag
\end{align}

Notice that
\begin{align}
1-pP_1-\delta =1-pP_1-[1-p(P_1+P_2-P_1P_2)] = p(1-P_1)P_2   \notag
\end{align}

Substituting $\pi_{(2,1)}$, it shows that
\begin{align}
\eta_2&=\frac{p[P_1+(1-P_1)P_2P_3]}{1-P_3} - \frac{P_3(1-\delta)\delta}{[(1-pP_1)P_3-p(1-P_1)P_2](1-P_3)}  \notag
\end{align}

In following paragraph, we will prove that the coefficient before the last term, i.e., $\eta_1+\eta_2+p(1-P_1)P_2$ is zero. Thus, for $n\geq2$ we obtain
\begin{align}
\Pr\{\Delta_{NB}^P=n\}&=\frac{(1-pP_1)P_3(1-\delta)}{(1-pP_1)P_3-p(1-P_1)P_2}\delta^{n-1} + \eta_2[(1-pP_1)(1-P_3)]^{n-1}
\end{align}

From equation (49), it is easy to see that the sum in (4) is zero for the case $n=1$. So, the expression (51) is valid for all $n\geq1$.

We now show that $\eta_1+\eta_2+p(1-P_1)P_2=0$. That is
\begin{align}
&\frac{p(1-P_1)P_2(1-\delta)}{\delta-(1-pP_1)(1-P_3)} + \left(p[P_1+(1-P_1)P_2P_3]-\frac{P_3(1-\delta)\delta}{\delta-(1-pP_1)(1-P_3)}\right) \notag \\
& \qquad \qquad \qquad \qquad \qquad \qquad \qquad \qquad  \times \frac{p(1-P_1)P_2}{(1-pP_1-\delta)(1-P_3)} + p(1-P_1)P_2 =0   \notag
\end{align}
which is equivalent to
\begin{align}
&\frac{1-\delta}{\delta-(1-pP_1)(1-P_3)}  \notag \\
&  + \left(p[P_1+(1-P_1)P_2P_3]-\frac{P_3(1-\delta)\delta}{\delta-(1-pP_1)(1-P_3)}\right) \frac{1}{(1-pP_1-\delta)(1-P_3)} + 1 =0   \notag \\
\Leftrightarrow{}& \frac{1-\delta}{\delta-(1-pP_1)(1-P_3)} + \frac{p[P_1+(1-P_1)P_2P_3]}{(1-pP_1-\delta)(1-P_3)} \notag \\
&  - \frac{P_3(1-\delta)\delta}{[\delta-(1-pP_1)(1-P_3)](1-pP_1-\delta)(1-P_3)} +1 =0
\end{align}

In (52), we have
\begin{equation}
\delta-(1-pP_1)(1-P_3)=(1-pP_1)P_3 - p(1-P_1)P_2   \notag
\end{equation}
and
\begin{align}
(1-pP_1-\delta)(1-P_3) =\left[1-pP_1- \left(1-p(P_1+P_2-P_1P_2) \right)\right](1-P_3) =p(1-P_1)P_2(1-P_3)  \notag
\end{align}

Then, equation (52) can be rewritten as
\begin{align}
&\frac{1-\delta}{(1-pP_1)P_3-p(1-P_1)P_2}\left(\frac{P_3\delta}{p(1-P_1)P_2(1-P_3)}-1\right)  =\frac{p[P_1+(1-P_1)P_2P_3]}{p(1-P_1)P_2(1-P_3)}+1
\end{align}
where the RHS of (53) is equal to
\begin{align}
&\frac{p[P_1+(1-P_1)P_2P_3]+p(1-P_1)P_2(1-P_3)}{p(1-P_1)P_2(1-P_3)} = \frac{p[P_1+(1-P_1)P_2]}{p(1-P_1)P_2(1-P_3)}= \frac{1-\delta}{p(1-P_1)P_2(1-P_3)}  \notag
\end{align}

Therefore, to prove (52), it suffices to show that
\begin{align}
&\frac{1}{(1-pP_1)P_3-p(1-P_1)P_2}\left(\frac{P_3\delta}{p(1-P_1)P_2(1-P_3)}-1\right) = \frac{1}{p(1-P_1)P_2(1-P_3)} \notag \\
\Leftrightarrow{}&\frac{P_3\delta}{p(1-P_1)P_2(1-P_3)}-1 = \frac{(1-pP_1)P_3-p(1-P_1)P_2}{p(1-P_1)P_2(1-P_3)} = \frac{(1-pP_1)P_3-p(1-P_1)P_2P_3}{p(1-P_1)P_2(1-P_3)}-1  \notag \\
\Leftrightarrow{}&\delta=(1-pP_1)-p(1-P_1)P_2 =1-p(P_1+P_2-P_1P_2)
\end{align}

Since equation (54) holds, we verify the original equation (52) and show that the coefficient $\eta_1+\eta_2+p(1-P_1)P_2$ is equal to zero. This completes the proof of Theorem 3.

\section{Proof of Theorem 5}
In this Appendix, we solve the system of stationary equations (6) and find all the stationary probabilities $\pi_{(n,m)}$ of AoI stochastic process $AoI_{NB}^{NP}$. For convenience, we copy these equations in the following.
\begin{equation}
\begin{cases}
\pi_{(n,m)}=\pi_{(n-1,m-1)}[(1-p)(1-P_3)+p(1-P_1)(1-P_3)]   & \qquad    (n>m\geq2)    \\
\pi_{(n,1)}=\pi_{(n-1,0)}p(1-P_1)P_2                  & \qquad   (n\geq2)   \\
\pi_{(n,0)}=\pi_{(n-1,0)}[(1-p)+p(1-P_1)(1-P_2)]    \\
\qquad \qquad      + \left(\sum\nolimits_{k=n}^{\infty}\pi_{(k,n-1)}\right)[(1-p)P_3+p(1-P_1)P_3]   & \qquad  (n\geq2)    \\
\pi_{(1,0)}= \left(\sum\nolimits_{n=1}^{\infty}\pi_{(n,0)} + \sum\nolimits_{n=2}^{\infty}\sum\nolimits_{m=1}^{n-1}\pi_{(n,m)}\right)pP_1
\end{cases}
\end{equation}

First of all, applying the first line of (55) repeatedly, we can write $\pi_{(n,m)}$ as
\begin{align}
\pi_{(n,m)}&=\pi_{(n-1,m-1)}[(1-pP_1)(1-P_3)]  \notag \\
&=\pi_{(n-2,m-2)}[(1-pP_1)(1-P_3)]^2  \notag \\
&\qquad \qquad \quad \quad \vdots  \notag \\
&=\pi_{(n-m+1,1)}[(1-pP_1)(1-P_3)]^{m-1}  \notag \\
&=\pi_{(n-m,0)}p(1-P_1)P_2[(1-pP_1)(1-P_3)]^{m-1}
\end{align}
where to obtain the last step (56), the second equation in (55) is applied.

Substituting (56), we first handle the sum in the third line of (55). It shows that
\begin{align}
\sum\nolimits_{k=n}^{\infty}\pi_{(k,n-1)}&= \sum\nolimits_{k=n}^{\infty}\pi_{(k-n+1,0)}p(1-P_1)P_2 [(1-pP_1)(1-P_3)]^{n-2}\notag \\
&= \left(\sum\nolimits_{k=1}^{\infty}\pi_{(k,0)}\right) p(1-P_1)P_2 [(1-pP_1)(1-P_3)]^{n-2}  \notag
\end{align}

Since all the probabilities $\pi_{(n,m)}$ have to add up to 1, we have the relation
\begin{align}
1&=\sum\nolimits_{n=1}^{\infty}\pi_{(n,0)} + \sum\nolimits_{n=m+1}^{\infty}\sum\nolimits_{m=1}^{\infty}\pi_{(n,m)} \notag \\
&=\sum\nolimits_{n=1}^{\infty}\pi_{(n,0)} + \sum\nolimits_{n=m+1}^{\infty}\sum\nolimits_{m=1}^{\infty}\pi_{(n-m,0)} p(1-P_1)P_2[(1-pP_1)(1-P_3)]^{m-1} \notag \\
&=\sum\nolimits_{n=1}^{\infty}\pi_{(n,0)} + \left(\sum\nolimits_{t=1}^{\infty}\pi_{(t,0)}\right) p(1-P_1)P_2 \sum\nolimits_{m=1}^{\infty} [(1-pP_1)(1-P_3)]^{m-1}  \\
&=\left(\sum\nolimits_{k=1}^{\infty}\pi_{(k,0)}\right)\left[1 + \frac{p(1-P_1)P_2}{1-(1-pP_1)(1-P_3)}\right]
\end{align}

To obtain (57), do the substitution $t=n-m$. From equation (58) we can derive that
\begin{equation}
\sum\nolimits_{k=1}^{\infty}\pi_{(k,0)}=\frac{1-(1-pP_1)(1-P_3)}{1-(1-pP_1)(1-P_3)+p(1-P_1)P_2} \notag
\end{equation}

Combining above results, for $n\geq2$ we can write that
\begin{equation}
\pi_{(n,0)}=\pi_{(n-1,0)}\delta + \xi \left[(1-pP_1)(1-P_3)\right]^{n-1}
\end{equation}
where
\begin{align}
\delta&=1-p(P_1+P_2-P_1P_2),   \notag \\
\xi&=\frac{p(1-P_1)P_2P_3[1-(1-pP_1)(1-P_3)]}{[1-(1-pP_1)(1-P_3)+p(1-P_1)P_2](1-P_3)} \notag
\end{align}

Equation (59) gives a recursive relation of the probabilities $\pi_{(n,0)}$. Using (59) iteratively yields that
\begin{align}
\pi_{(n,0)}&=\pi_{(1,0)}\delta^{n-1} + \xi \sum\nolimits_{j=0}^{n-2}\delta^j[(1-pP_1)(1-P_3)]^{n-1-j}   \notag \\
&=\pi_{(1,0)}\delta^{n-1} + \frac{\xi(1-pP_1)(1-P_3)}{(1-pP_1)(1-P_3)-\delta} \left\{[(1-pP_1)(1-P_3)]^{n-1} - \delta^{n-1}\right\}  \\
&=\beta_1 \delta^{n-1} + \beta_2[(1-pP_1)(1-P_3)]^{n-1}
\end{align}

Substituting $\xi$, it show that the coefficient $\beta_2$ is equal to
\begin{align}
\beta_2&=\frac{[1-(1-pP_1)(1-P_3)]p(1-P_1)P_2P_3}{1-(1-pP_1)(1-P_3)+p(1-P_1)P_2} \frac{1-pP_1}{(1-pP_1)(1-P_3)-\delta}    \notag
\end{align}

Since $\pi_{(1,0)}=pP_1$, we have $\beta_1=pP_1-\beta_2$. It is easy to see that when $n=1$, equation (61) reduces to $\pi_{(1,0)}=\pi_{(1,0)}$, thus probability expression (61) is actually valid for all $n\geq1$.

According to (56), for $n>m\geq1$, the probability $\pi_{(n,m)}$ is then determined as
\begin{align}
\pi_{(n,m)}&= \pi_{(n-m,0)}p(1-P_1)P_2[(1-pP_1)(1-P_3)]^{m-1} \notag \\
={}& p(1-P_1)P_2\left\{ \beta_1\delta^{n-m-1} + \beta_2[(1-pP_1)(1-P_3)]^{n-m-1} \right\} [(1-pP_1)(1-P_3)]^{m-1} \notag \\
={}& p(1-P_1)P_2 \left\{\beta_1\delta^{n-2}\left(\frac{(1-pP_1)(1-P_3)}{\delta}\right)^{m-1} + \beta_2[(1-pP_1)(1-P_3)]^{n-2} \right\}
\end{align}

So far, we have completely solved the system of equations (55) and obtained all the stationary probabilities $\pi_{(n,m)}$.

\section{Proof of Theorem 8}
In this appendix, we determine all the stationary probabilities $\pi_{(n,m,l)}$ of stochastic process $AoI_B^P$ by solving the following system of equations.
\begin{equation}
\begin{cases}
\pi_{(n,m,l)}=\pi_{(n-1,m-1,l-1)}\delta(1-P_3)     & \qquad   (n>m>l\geq2)    \\
\pi_{(n,m,1)}=\left(\sum\nolimits_{j=0}^{m-2}\pi_{(n-1,m-1,j)}\right)\eta(1-P_3)     & \qquad  (n>m\geq2)    \\
\pi_{(n,m,0)}=\pi_{(n-1,m-1,0)}\delta(1-P_3) +\left(\sum\nolimits_{k=n}^{\infty}\pi_{(k,n-1,m-1)}\right)\delta P_3  & \qquad   (n>m\geq2)  \\
\pi_{(n,1,0)}=\pi_{(n-1,0,0)}\eta +\left(\sum\nolimits_{k=n}^{\infty}\sum\nolimits_{j=0}^{n-2}\pi_{(k,n-1,j)}\right)\eta P_3 & \qquad  (n\geq2)    \\
\pi_{(n,0,0)}=\pi_{(n-1,0,0)}\delta+\left(\sum\nolimits_{k=n}^{\infty}\pi_{(k,n-1,0)}\right)\delta P_3   &  \qquad  (n\geq2)  \\
\pi_{(1,0,0)}=\Big(\sum\nolimits_{n=1}^{\infty}\pi_{(n,0,0)}+ \sum\nolimits_{n=2}^{\infty}\sum\nolimits_{m=1}^{n-1}\pi_{(n,m,0)}\\
\qquad  \qquad  \qquad  \qquad \qquad  + \sum\nolimits_{n=3}^{\infty}\sum\nolimits_{m=2}^{n-1}\sum\nolimits_{l=1}^{m-1}\pi_{(n,m,l)}\Big)pP_1
\end{cases}
\end{equation}

First of all, for $n>m>l\geq1$, using first two lines of (63) we can obtain that
\begin{align}
\pi_{(n,m,l)}&= \pi_{(n-1,m-1,l-1)}[\delta(1-P_3)]  \notag \\
  {}& \qquad \qquad \qquad  \vdots    \notag \\
&= \pi_{(n-l+1,m-l+1,1)}[\delta(1-P_3)]^{l-1}  \notag \\
&= \left(\sum\nolimits_{j=0}^{m-l-1}\pi_{(n-l,m-l,j)}\right)\eta(1-P_3)[\delta(1-P_3)]^{l-1}
\end{align}

Substituting expression (64), the sum in the third equation of (63) can be rewritten as
\begin{align}
\sum\nolimits_{k=n}^{\infty}\pi_{(k,n-1,m-1)}&= \sum\nolimits_{k=n}^{\infty}\left(\sum\nolimits_{j=0}^{n-m-1}\pi_{(k-m+1,n-m,j)} \right) \eta(1-P_3)[\delta(1-P_3)]^{m-2} \notag \\
&= \left( \sum\nolimits_{k=n-m+1}^{\infty}\sum\nolimits_{j=0}^{n-m-1}\pi_{(k,n-m,j)}\right) \eta(1-P_3)[\delta(1-P_3)]^{m-2}\notag \\
&= t_{n-m}\eta(1-P_3)[\delta(1-P_3)]^{m-2}
\end{align}
where we define
\begin{equation}
t_n= \sum\nolimits_{k=n+1}^{\infty}\sum\nolimits_{j=0}^{n-1}\pi_{(k,n,j)}  \qquad  (n\geq1)
\end{equation}
which is actually the probability that the middle parameter equals $n$.

Therefore, for $n>m\geq2$ we have
\begin{align}
\pi_{(n,m,0)}&=\pi_{(n-1,m-1,0)}\delta(1-P_3)  + t_{n-m}\eta (1-P_3)[\delta(1-P_3)]^{m-2} \delta P_3  \notag  \\
&=\pi_{(n-1,m-1,0)}\delta(1-P_3) + t_{n-m}\eta P_3[\delta(1-P_3)]^{m-1}
\end{align}

Equation (67) is a recursive formula of the probabilities $\pi_{(n,m,0)}$. By repeatedly using (67), it shows that
\begin{align}
\pi_{(n,m,0)}&=\pi_{(n-m+1,1,0)}[\delta(1-P_3)]^{m-1} + (m-1)t_{n-m}\eta P_3[\delta(1-P_3)]^{m-1}  \notag  \\
&= \left\{\pi_{(n-m,0,0)}\eta+t_{n-m}\eta P_3\right\}[\delta(1-P_3)]^{m-1} + (m-1)t_{n-m}\eta P_3[\delta(1-P_3)]^{m-1}   \\
&=\pi_{(n-m,0,0)}\eta[\delta(1-P_3)]^{m-1} + m t_{n-m}\eta P_3[\delta(1-P_3)]^{m-1}
\end{align}

In (68), we have used the fourth line in (63) to represent $\pi_{(n-m+1,1,0)}$ with $\pi_{(n-m,0,0)}$ and $t_{n-m}$. Notice that equation (69) is valid for all $n>m\geq1$.

Substituting equation (69), the summation within the fifth line of (63) can be transformed as follows.
\begin{align}
\sum\nolimits_{k=n}^{\infty}\pi_{(k,n-1,0)}&=\sum\nolimits_{k=n}^{\infty}\Big\{\pi_{(k-n+1,0,0)}\eta[\delta(1-P_3)]^{n-2} +(n-1)t_{k-n+1}\eta P_3[\delta(1-P_3)]^{n-2} \Big\}  \notag \\
&= \left(\sum\nolimits_{k=1}^{\infty}\pi_{(k,0,0)}\right)\eta[\delta(1-P_3)]^{n-2} + \left(\sum\nolimits_{k=1}^{\infty}t_{k}\right) \eta P_3(n-1)[\delta(1-P_3)]^{n-2}   \notag
\end{align}

Since $t_n$ denotes the probability that the middle parameter equals $n$, we have the relation
\begin{equation}
t_0 + \sum\nolimits_{n=1}^{\infty}t_n =1   \notag
\end{equation}
in which
\begin{equation}
t_0=\sum\nolimits_{k=1}^{\infty}\pi_{(k,0,0)}=S   \notag
\end{equation}
is the probability that the middle parameter takes value 0, which is represented by $S$ in following paragraphs.

Combining above results, for $n\geq2$ we have that
\begin{align}
\pi_{(n,0,0)}&=\pi_{(n-1,0,0)}\delta + \Big\{S\eta[\delta(1-P_3)]^{n-2} + (1-S)\eta P_3(n-1)[\delta(1-P_3)]^{n-2}\Big\}\delta P_3
\end{align}
which gives a recursive formula for $\pi_{(n,0,0)}$. Before the general expression of $\pi_{(n,0,0)}$ is given, we first determine $S$.

From $n=2$ to $\infty$, adding up both sides of equation (70) yields that
\begin{equation}
S-\pi_{(1,0,0)}=S\delta+\frac{S\delta\eta P_3}{1-\delta(1-P_3)}+\frac{(1-S)\delta\eta P_3^2}{[1-\delta(1-P_3)]^2}
\end{equation}

Since we have known that $\pi_{(1,0,0)}=pP_1$, from (71) we can derive
\begin{equation}
S=\frac{pP_1[1-\delta(1-P_3)]^2+\delta\eta P_3^2}{(1-\delta)[1-\delta(1-P_3)]^2-\delta\eta P_3(1-\delta)(1-P_3)}
\end{equation}

Applying equation (70) iteratively, we have
\begin{align}
\pi_{(n,0,0)}&=\pi_{(1,0,0)}\delta^{n-1}+S\delta\eta P_3\sum\nolimits_{j=0}^{n-2}\delta^j[\delta(1-P_3)]^{n-2-j} \notag \\
&\qquad +(1-S)\delta\eta P_3^2\sum\nolimits_{j=0}^{n-2}\delta^j(n-1-j)[\delta(1-P_3)]^{n-2-j} \notag \\
&= \left(pP_1+\eta\right)\delta^{n-1}-\eta[\delta(1-P_3)]^{n-1}  - (1-S)\eta P_3(n-1)[\delta(1-P_3)]^{n-1}
\end{align}
in which we have substituted $\pi_{(1,0,0)}=pP_1$ and omit some intermediate calculation details. Equation (73) is valid for $n=1$ can be verified directly.

In order to determine probabilities $\pi_{(n,m,0)}$ by (69), we next derive the general expression of $t_n$. For $n\geq2$, it shows that
\begin{align}
t_n&=\sum\nolimits_{k=n+1}^{\infty}\sum\nolimits_{j=0}^{n-1}\pi_{(k,n,j)}  \notag \\
&=\sum\nolimits_{k=n+1}^{\infty}\pi_{(k,n,0)}+\sum\nolimits_{k=n+1}^{\infty}\sum\nolimits_{j=1}^{n-1}\pi_{(k,n,j)}  \notag \\
&=\sum\nolimits_{k=n+1}^{\infty}\Big\{\pi_{(k-n,0,0)}\eta[\delta(1-P_3)]^{n-1}+n t_{k-n}\eta P_3[\delta(1-P_3)]^{n-1}\Big\} \notag \\
& \qquad +\sum\nolimits_{k=n+1}^{\infty}\sum\nolimits_{j=1}^{n-1}\left(\sum\nolimits_{y=0}^{n-j-1}\pi_{(k-j,n-j,y)}\right)\eta(1-P_3) [\delta(1-P_3)]^{j-1}   \\
&=S\eta [\delta(1-P_3)]^{n-1}+(1-S)\eta P_3n[\delta(1-P_3)]^{n-1}   \notag \\
&\qquad + \eta(1-P_3)\sum\nolimits_{j=1}^{n-1}t_{n-j}[\delta(1-P_3)]^{j-1}
\end{align}
where in (74) we have used equations (69) and (64). To obtain (75), notice that
\begin{align}
\sum\nolimits_{k=n+1}^{\infty}\sum\nolimits_{y=0}^{n-j-1}\pi_{(k-j,n-j,y)}=\sum\nolimits_{k=n-j+1}^{\infty}\sum\nolimits_{y=0}^{n-j-1}\pi_{(k,n-j,y)}= t_{n-j} \notag
\end{align}

Similarly, we also have
\begin{align}
t_{n-1}&=S\eta [\delta(1-P_3)]^{n-2}+(1-S)\eta P_3(n-1)[\delta(1-P_3)]^{n-2} \notag \\
& \quad + \eta(1-P_3)\sum\nolimits_{j=1}^{n-2}t_{n-1-j}[\delta(1-P_3)]^{j-1}
\end{align}

Compute the following difference
\begin{equation}
t_n-t_{n-1}\delta(1-P_3)=(1-S)\eta P_3[\delta(1-P_3)]^{n-1}+t_{n-1}\eta(1-P_3)  \notag
\end{equation}
which is equivalent to
\begin{equation}
t_n=t_{n-1}(\delta+\eta)(1-P_3)+(1-S)\eta P_3[\delta(1-P_3)]^{n-1}
\end{equation}

Applying equation (77) repeatedly yields that
\begin{align}
t_n&=t_1 [(\delta+\eta)(1-P_3)]^{n-1} +(1-S)\eta P_3 \sum\nolimits_{j=0}^{n-2}[(\delta+\eta)(1-P_3)]^j[\delta(1-P_3)]^{n-1-j}  \notag \\
&=t_1 [(\delta+\eta)(1-P_3)]^{n-1} +(1-S)\delta P_3 \left\{[(\delta+\eta)(1-P_3)]^{n-1} -  [\delta(1-P_3)]^{n-1}  \right\}
\end{align}

Using the general expression (69), we have
\begin{align}
t_1=\sum\nolimits_{k=2}^{\infty}\pi_{(k,1,0)} =\sum\nolimits_{k=2}^{\infty} \left\{\pi_{(k-1,0,0)}\eta + t_{k-1}\eta P_3 \right\}  =S\eta+(1-S)\eta P_3
\end{align}

Combining (78) and (79), eventually we derive that
\begin{align}
t_n&=\left\{S\eta+(1-S)(\delta+\eta)P_3\right\}[(\delta+\eta)(1-P_3)]^{n-1} - (1-S)\delta P_3[\delta(1-P_3)]^{n-1}  \notag \\
&=\widetilde{S}[(\delta+\eta)(1-P_3)]^{n-1} - (1-S)\delta P_3[\delta(1-P_3)]^{n-1}
\end{align}
where we denote
\begin{equation}
\widetilde{S}=S\eta+(1-S)(\delta+\eta)P_3   \notag
\end{equation}

Provided equations (73) and (80), now the probabilities $\pi_{(n,m,0)}$ can be determined by (69).
\begin{align}
\pi_{(n,m,0)}&= \pi_{(n-m,0,0)}\eta[\delta(1-P_3)]^{m-1}+ m t_{n-m}\eta P_3[\delta(1-P_3)]^{m-1}  \notag \\
&=\bigg\{ \left(pP_1+\eta\right)\delta^{n-m-1}-\eta[\delta(1-P_3)]^{n-m-1}  \notag \\
& \quad  -(1-S)\eta P_3 (n-m-1)[\delta(1-P_3)]^{n-m-1}\bigg\}\eta[\delta(1-P_3)]^{m-1}   \notag \\
& \quad +\left\{ \widetilde{S}[(\delta+\eta)(1-P_3)]^{n-m-1} - (1-S)\delta P_3 [\delta(1-P_3)]^{n-m-1} \right\}\eta P_3 m[\delta(1-P_3)]^{m-1}   \notag \\
&= \eta(pP_1+\eta)\delta^{n-2}(1-P_3)^{m-1}-\eta^2[\delta(1-P_3)]^{n-2} - (1-S)\eta^2 P_3(n-m-1)[\delta(1-P_3)]^{n-2}   \notag \\
& \quad + \widetilde{S}\eta P_3[(\delta+\eta)(1-P_3)]^{n-2}m\left(\frac{\delta}{\delta+\eta}\right)^{m-1} - (1-S)\delta\eta P_3^2 m[\delta(1-P_3)]^{n-2}
\end{align}

At last, the probabilities $\pi_{(n,m,l)}$, $n>m>l\geq1$ remains to be determined. Since
\begin{equation}
\pi_{(n,m,l)}=\pi_{(n-l+1,m-l+1,1)}[\delta(1-P_3)]^{l-1}  \notag
\end{equation}
thus it suffices to find all the probabilities $\pi_{(n,m,1)}$ for all the tuples $(n,m)$.

From the general formula (64), we have
\begin{align}
\pi_{(n,m,1)}&=\left(\sum\nolimits_{j=0}^{m-2}\pi_{(n-1,m-1,j)}\right)\eta(1-P_3)   \notag \\
&=\left(\pi_{(n-1,m-1,0)}+\sum\nolimits_{j=1}^{m-2}\pi_{(n-1,m-1,j)}\right)\eta(1-P_3)   \notag \\
&=\pi_{(n-1,m-1,0)}\eta(1-P_3) + \left(\sum\nolimits_{j=1}^{m-2}\pi_{(n-j,m-j,1)}[\delta(1-P_3)]^{j-1}\right)\eta(1-P_3)   \notag
\end{align}

Do once iteration, we can obtain the expansion expression of probability $\pi_{(n-1,m-1,1)}$. The difference
\begin{align}
&\pi_{(n,m,1)}-\pi_{(n-1,m-1,1)}\delta(1-P_3)  \notag \\
={}& \left\{\pi_{(n-1,m-1,0)}-\pi_{(n-2,m-2,0)}\delta(1-P_3)\right\}\eta(1-P_3) +\pi_{(n-1,m-1,1)}\eta(1-P_3)   \notag \\
={}&t_{n-m}\eta^2 P_3(1-P_3)[\delta(1-P_3)]^{m-2}+\pi_{(n,m,1)}\eta(1-P_3)
\end{align}
gives the following recursive relation
\begin{align}
\pi_{(n,m,1)}=\pi_{(n-1,m-1,1)}(\delta+\eta)(1-P_3) + t_{n-m}\eta^2 P_3(1-P_3)[\delta(1-P_3)]^{m-2}
\end{align}

In (82), we have substituted the probability expression (69). Then, the general formula of $\pi_{(n,m,1)}$ can be derived by applying (83) repeatedly. It shows that
\begin{align}
\pi_{(n,m,1)}&=\pi_{(n-m+2,2,1)}[(\delta+\eta)(1-P_3)]^{m-2}  \notag \\
&\quad + \eta^2P_3(1-P_3)t_{n-m}\sum\nolimits_{j=0}^{m-3}[(\delta+\eta)(1-P_3)]^j[\delta(1-P_3)]^{m-2-j}  \notag \\
&=\pi_{(n-m+2,2,1)}[(\delta+\eta)(1-P_3)]^{m-2}  \notag \\
&\quad + \delta\eta P_3(1-P_3)t_{n-m} \left\{[(\delta+\eta)(1-P_3)]^{m-2}-[\delta(1-P_3)]^{m-2}\right\}
\end{align}
in which
\begin{align}
\pi_{(n-m+2,2,1)}&= \pi_{(n-m+1,1,0)}\eta(1-P_3)  \notag \\
&=\left\{\pi_{(n-m,0,0)}\eta+t_{n-m}\eta P_3\right\}\eta(1-P_3) \notag \\
&=\pi_{(n-m,0,0)}\eta^2(1-P_3) + t_{n-m}\eta^2 P_3(1-P_3)
\end{align}

Combining equations (85) and (84) and merging the same terms gives
\begin{align}
\pi_{(n,m,1)}&=\pi_{(n-m,0,0)}\eta^2(1-P_3)[(\delta+\eta)(1-P_3)]^{m-2}  \notag \\
&\qquad + \eta P_3 t_{n-m} \left\{[(\delta+\eta)(1-P_3)]^{m-1}-[\delta(1-P_3)]^{m-1}\right\}  \notag
\end{align}

Since both $\pi_{(n-m,0,0)}$ and $t_{n-m}$ have been obtained, by substituting (73) and (80), we show that the final result is
\begin{align}
\pi_{(n,m,1)}&= \eta^2(1-P_3)\bigg\{(pP_1+\eta)\delta^{n-3}\left(\frac{(\delta+\eta)(1-P_3)}{\delta}\right)^{m-2} - \eta[\delta(1-P_3)]^{n-3}\left(\frac{\delta+\eta}{\delta}\right)^{m-2}  \notag \\
& \quad  -(1-S)\eta P_3(n-m-1)[\delta(1-P_3)]^{n-3}\left(\frac{\delta+\eta}{\delta}\right)^{m-2} \bigg\}  \notag \\
& \quad +\eta P_3\widetilde{S}[(\delta+\eta)(1-P_3)]^{n-2} - \eta P_3\widetilde{S}[(\delta+\eta)(1-P_3)]^{n-2}\left(\frac{\delta}{\delta+\eta}\right)^{m-1}  \notag \\
& \quad - \delta\eta P_3^2(1-S)[\delta(1-P_3)]^{n-2}\left(\frac{\delta+\eta}{\delta}\right)^{m-1} + \delta\eta P_3^2(1-S)[\delta(1-P_3)]^{n-2}
\end{align}

For $n>m>l\geq1$, the last step is
\begin{align}
\pi_{(n,m,l)}&=\pi_{(n-l+1,m-l+1,1)}[\delta(1-P_3)]^{l-1}  \notag \\
&=\eta^2(1-P_3)\bigg\{ \left(pP_1+\eta\right)\delta^{n-3}\left(\frac{(\delta+\eta)(1-P_3)}{\delta}\right)^{m-2} \left(\frac{\delta}{\delta+\eta}\right)^{l-1}  \notag \\
& \qquad  -\eta[\delta(1-P_3)]^{n-3}\left(\frac{\delta+\eta}{\delta}\right)^{m-2} \left(\frac{\delta}{\delta+\eta}\right)^{l-1} -(1-S)\eta P_3(n-m-1)   \notag \\
& \qquad  \times [\delta(1-P_3)]^{n-3}\left(\frac{\delta+\eta}{\delta}\right)^{m-2}\left(\frac{\delta}{\delta+\eta}\right)^{l-1} \bigg\}  \notag \\
& \quad + \eta P_3\widetilde{S}[(\delta+\eta)(1-P_3)]^{n-2}\left(\frac{\delta}{\delta+\eta}\right)^{l-1} - \eta P_3\widetilde{S}[(\delta+\eta)(1-P_3)]^{n-2}\left(\frac{\delta}{\delta+\eta}\right)^{m-1}  \notag \\
&\quad  -\delta\eta P_3^2(1-S)[\delta(1-P_3)]^{n-2}\left(\frac{\delta+\eta}{\delta}\right)^{m-1} \left(\frac{\delta}{\delta+\eta}\right)^{l-1}  \notag \\
&\quad + \delta\eta P_3^2(1-S)[\delta(1-P_3)]^{n-2}
\end{align}

Collecting the results in equations (73), (81) and (87), we show that all the stationary probabilities of stochastic process $AoI_B^P$ are determined. This completes the proof of Theorem 8.

\section{Proof of Theorem 9}
The stationary distribution of AoI, $\Delta_B^P$, is calculated in this Appendix. According to formula (15), for $n\geq3$, we have
\begin{align}
\Pr\{\Delta_B^P=n\}= \pi_{(n,0,0)}+\sum\nolimits_{m=1}^{n-1}\pi_{(n,m,0)}+\sum\nolimits_{m=2}^{n-1}\sum\nolimits_{l=1}^{m-1}\pi_{(n,m,l)} \notag
\end{align}
where $\pi_{(n,0,0)}$ is given in (16).

Substituting expression (17), the first sum is computed as
\begin{align}
&\sum\nolimits_{m=1}^{n-1}\pi_{(n,m,0)}  \notag \\
={}&\sum\nolimits_{m=1}^{n-1}\Big\{ \eta\left(pP_1+\eta\right)\delta^{n-2}\left(1-P_3\right)^{m-1}-\eta^2\left[\delta(1-P_3)\right]^{n-2} - (1-S)\eta^2P_3(n-m-1)  \notag \\
  {}& \times \left[\delta(1-P_3)\right]^{n-2} +\widetilde{S}\eta P_3 \left[(\delta+\eta)(1-P_3)\right]^{n-2} m \left(\frac{\delta}{\delta+\eta}\right)^{m-1} - (1-S)\delta\eta P_3^2 m[\delta(1-P_3)]^{n-2}  \Big\}   \notag \\
={}&\eta\left(pP_1+\eta\right)\delta^{n-2} \frac{1-(1-P_3)^{n-1}}{P_3}  -  \eta^2(n-1)\left[\delta(1-P_3)\right]^{n-2} - (1-S)\eta^2P_3  \notag \\
  {}& \times \frac{(n-1)(n-2)}{2}\left[\delta(1-P_3)\right]^{n-2} +\widetilde{S}\eta P_3 \left[(\delta+\eta)(1-P_3)\right]^{n-2}\frac{(\delta+\eta)^2}{\eta^2}      \notag \\
  {}&\times  \left[1- \left(\frac{\delta}{\delta+\eta}\right)^{n-1}- \frac{\eta}{\delta+\eta}(n-1) \left(\frac{\delta}{\delta+\eta}\right)^{n-1}  \right] -(1-S)\delta\eta P_3^2 \frac{n(n-1)}{2}[\delta(1-P_3)]^{n-2}
\end{align}

We omit the further calculations and directly show that
\begin{align}
\sum\nolimits_{m=1}^{n-1}\pi_{(n,m,0)}&=\frac{\eta(pP_1+\eta)}{\delta P_3}\delta^{n-1} + \frac{\widetilde{S}(\delta+\eta)P_3}{\eta(1-P_3)}[(\delta+\eta)(1-P_3)]^{n-1} \notag \\
&\qquad - \left\{\frac{\eta(pP_1+\eta)}{\delta P_3}+\frac{\widetilde{S}\delta P_3}{\eta(1-P_3)}-\frac{\eta^2-(1-S)\eta^2 P_3}{\delta(1-P_3)}\right\} [\delta(1-P_3)]^{n-1}  \notag \\
&\qquad - \left\{\frac{\eta^2-(1-S)\eta^2P_3}{\delta(1-P_3)}+\frac{\widetilde{S}\eta P_3}{\eta(1-P_3)} \right\}n[\delta(1-P_3)]^{n-1} \notag \\
&\qquad  -\frac{(1-S)\eta P_3(\delta P_3+\eta)}{2\delta(1-P_3)}n(n-1)[\delta(1-P_3)]^{n-1}
\end{align}

At last,
\begin{align}
&\sum\nolimits_{m=2}^{n-1}\sum\nolimits_{l=1}^{m-1}\pi_{(n,m,l)}  \notag \\
={}&\sum\nolimits_{m=2}^{n-1}\sum\nolimits_{l=1}^{m-1} \text{Equation (18)} \notag \\
={}&\sum\nolimits_{m=2}^{n-1}\Bigg\{\eta(pP_1+\eta)\delta^{n-2} \left[\left(\frac{(\delta+\eta)(1-P_3)}{\delta}\right)^{m-1}-(1-P_3)^{m-1}\right]   \notag \\
  {}&\qquad - \eta^2[\delta(1-P_3)]^{n-2}\left[\left(\frac{\delta+\eta}{\delta}\right)^{m-1} -1 \right]  \notag \\
  {}&\qquad \quad -(1-S)\eta^2 P_3 [\delta(1-P_3)]^{n-2} (n-m-1)\left[\left(\frac{\delta+\eta}{\delta}\right)^{m-1} -1\right]  \Bigg\}  \notag \\
  {}&\quad +\widetilde{S}(\delta+\eta)P_3[(\delta+\eta)(1-P_3)]^{n-2}\left[1-\left(\frac{\delta}{\delta+\eta}\right)^{m-1} \right] \notag \\
  {}&\quad -\widetilde{S}\eta P_3[(\delta+\eta)(1-P_3)]^{n-2}(m-1)\left(\frac{\delta}{\delta+\eta}\right)^{m-1} \notag \\
  {}&\quad -(1-S)\delta(\delta+\eta)P_3^2[\delta(1-P_3)]^{n-2}\left[\left(\frac{\delta+\eta}{\delta}\right)^{m-1}-1 \right]  \notag \\
  {}&\quad +(1-S)\delta\eta P_3^2(m-1)[\delta(1-P_3)]^{n-2}  \notag \\
={}&\eta(pP_1+\eta)\delta^{n-2} \bigg[ \frac{\eta(1-P_3)}{[\delta-(\delta+\eta)(1-P_3)]P_3}  \notag \\
  {}& \quad - \frac{\delta}{\delta-(\delta+\eta)(1-P_3)} \left(\frac{(\delta+\eta)(1-P_3)}{\delta}\right)^{n-1} +\frac{(1-P_3)^{n-1}}{P_3}\bigg]  \notag \\
  {}&-\eta^2[\delta(1-P_3)]^{n-2}\left[ \frac{\delta}{\eta}\left(\frac{\delta+\eta}{\delta}\right)^{n-1}-\frac{\delta+\eta}{\eta}-(n-2) \right]  \notag \\
  {}&-(1-S)\eta^2 P_3 [\delta(1-P_3)]^{n-2}\bigg[-(n-2)\frac{\delta+\eta}{\eta}  \notag \\
  {}& \quad + \frac{\delta^2}{\eta^2}\left(\frac{\delta+\eta}{\delta}\right)^{n-1}-\frac{\delta(\delta+\eta)}{\eta^2} -\frac{(n-2)(n-3)}{2}\bigg]   \notag \\
  {}&+\widetilde{S}(\delta+\eta)P_3[(\delta+\eta)(1-P_3)]^{n-2} \left[(n-2)-\frac{\delta}{\eta}+\frac{\delta+\eta}{\eta} \left( \frac{\delta}{\delta+\eta}\right)^{n-1} \right]  \notag \\
  {}&-\widetilde{S}\eta P_3[(\delta+\eta)(1-P_3)]^{n-2}\bigg[ \frac{\delta(\delta+\eta)}{\eta^2} -\frac{(\delta+\eta)^2}{\eta^2} \left(\frac{\delta}{\delta+\eta}\right)^{n-1}-\frac{\delta+\eta}{\eta}(n-2)\left(\frac{\delta}{\delta+\eta}\right)^{n-1}  \bigg] \notag \\
  {}&-(1-S)\delta(\delta+\eta)P_3^2[\delta(1-P_3)]^{n-2} \left[ \frac{\delta}{\eta}\left(\frac{\delta+\eta}{\delta}\right)^{n-1}-\frac{\delta+\eta}{\eta}-(n-2) \right]  \notag \\
  {}&+(1-S)\delta\eta P_3^2\frac{(n-1)(n-2)}{2}[\delta(1-P_3)]^{n-2}
\end{align}

Collecting all the results obtained in (16), (89) and (90), after careful calculations and numerical verification, we derive that
\begin{align}
&\Pr\{\Delta_B^P=n\}  \notag \\
={}& \frac{(\delta+\eta)P_3(pP_1+\eta)}{\delta-(\delta+\eta)(1-P_3)}\delta^{n-1} +\frac{P_3[\eta(\delta+\eta)+(1-S)(\delta+\eta)P_3(2\delta-\eta)]}{\eta(1-P_3)}[\delta(1-P_3)]^{n-1}  \notag \\
  {}& -\bigg\{\frac{pP_1\eta(1-P_3)+\delta\eta P_3}{[\delta-(\delta+\eta)(1-P_3)](1-P_3)} +\frac{(1-S)\delta P_3[\eta+(\delta+\eta)P_3]+\widetilde{S}(\delta+\eta)P_3}{\eta(1-P_3)} \bigg\} [(\delta+\eta)(1-P_3)]^{n-1}  \notag \\
  {}& +\frac{(1-S)(\delta+\eta)P_3^2}{1-P_3}n[\delta(1-P_3)]^{n-1} +\frac{\widetilde{S}P_3}{1-P_3}n[(\delta+\eta)(1-P_3)]^{n-1}  \qquad (n\geq3)
\end{align}

It can be checked directly that equation (89) is zero when $n=1$, and (90) equals zero for both cases $n=1$ and $n=2$. Therefore, the probability distribution (91) is in fact valid for all $n\geq1$. This completes the proof of Theorem 9.

\section{Proof of Corollary 1}
Notice that for $0<x<1$ and an arbitrary integer $k\geq1$, we have
\begin{equation}
\sum\nolimits_{n=1}^{\infty}n(n-1)\cdots(n-k+1)x^{n-k}=\frac{k!}{(1-x)^{k+1}}
\end{equation}

Then, according to equation (5), the mean of AoI $\Delta_{NB}^P$ is given as
\begin{align}
\mathbb{E}[\Delta_{NB}^P]&=\sum\nolimits_{n=1}^{\infty}n\Pr\{\Delta=n\}  \notag \\
&=\sum\nolimits_{n=1}^{\infty}\bigg\{\frac{(1-pP_1)P_3(1-\delta)}{(1-pP_1)P_3-p(1-P_1)P_2}n\delta^{n-1} + \xi n[(1-pP_1)(1-P_3)]^{n-1} \bigg\}   \notag \\
&=\frac{(1-pP_1)P_3}{[(1-pP_1)P_3-p(1-P_1)P_2](1-\delta)} + \frac{\xi}{[1-(1-pP_1)(1-P_3)]^2}
\end{align}

From AoI distribution (11), we can obtain that
\begin{align}
\mathbb{E}[\Delta_{NB}^{NP}]&=\left(\beta_1 + \frac{p(1-P_1)P_2}{\delta-(1-pP_1)(1-P_3)}\beta_1\right) \frac{1}{(1-\delta)^2}  \notag \\
& \quad  + \left(\beta_2 - \frac{p(1-P_1)P_2}{\delta-(1-pP_1)(1-P_3)}\beta_1\right) \frac{1}{[1-(1-pP_1)(1-P_3)]^2}+\frac{2p(1-P_1)P_2\beta_2}{[1-(1-pP_1)(1-P_3)]^3}   \notag \\
&=\frac{\beta_1}{(1-\delta)^2}+\frac{\beta_2}{[1-(1-pP_1)(1-P_3)]^2}  \notag \\
& \quad + \frac{p(1-P_1)P_2\beta_1[2-\delta-(1-pP_1)(1-P_3)]}{(1-\delta)^2 [1-(1-pP_1)(1-P_3)]^2} + \frac{2p(1-P_1)P_2\beta_2}{[1-(1-pP_1)(1-P_3)]^3}
\end{align}

Finally, the average value of AoI $\Delta_B^P$ can be calculated using distribution expression (21). We show that
\begin{align}
\mathbb{E}[\Delta_B^P]&=\frac{(\delta+\eta)P_3(pP_1+\eta)}{\delta-(\delta+\eta)(1-P_3)}\frac{1}{(1-\delta)^2}  -\frac{c_1}{[1-(\delta+\eta)(1-P_3)]^2} + \frac{c_2}{[1-\delta(1-P_3)]^2}  \notag \\
& \qquad + \frac{(1-S)(\delta+\eta)P_3^2}{1-P_3}\frac{1+\delta(1-P_3)}{[1-\delta(1-P_3)]^3} + \frac{P_3\widetilde{S}}{1-P_3}\frac{1+(\delta+\eta)(1-P_3)}{[1-(\delta+\eta)(1-P_3)]^3}     \\
&=\frac{(\delta+\eta)P_3(pP_1+\eta)}{[\delta-(\delta+\eta)(1-P_3)](1-\delta)^2}-\frac{c_1}{[1-(\delta+\eta)(1-P_3)]^2} +\frac{c_2}{[1-\delta(1-P_3)]^2}  \notag \\
& \qquad +\frac{(1-S)(\delta+\eta)P_3^2[1+\delta(1-P_3)]}{(1-P_3)[1-\delta(1-P_3)]^3} +\frac{P_3\widetilde{S}[1+(\delta+\eta)(1-P_3)]}{(1-P_3)[1-(\delta+\eta)(1-P_3)]^3}
\end{align}
where in (96) observing that
\begin{align}
\sum\nolimits_{n=1}^{\infty}n^2[\delta(1-P_3)]^{n-1} &= \sum\nolimits_{n=1}^{\infty}[n(n-1)+n][\delta(1-P_3)]^{n-1} \notag \\
&=\sum\nolimits_{n=1}^{\infty}n(n-1)[\delta(1-P_3)]^{n-1} + \sum\nolimits_{n=1}^{\infty}n[\delta(1-P_3)]^{n-1} \notag \\
&=\delta(1-P_3)\frac{2}{[1-\delta(1-P_3)]^3} + \frac{1}{[1-\delta(1-P_3)]^2}  \notag \\
&= \frac{1+\delta(1-P_3)}{[1-\delta(1-P_3)]^3}  \notag
\end{align}

Similarly, we have
\begin{equation}
\sum\nolimits_{n=1}^{\infty}n^2[(\delta+\eta)(1-P_3)]^{n-1} = \frac{1+(\delta+\eta)(1-P_3)}{[1-(\delta+\eta)(1-P_3)]^3}  \notag
\end{equation}

All the parameters can be found in previous Theorem 3, Theorem 6 and Theorem 9. This completes the proof of this Corollary.




%




\end{document}